\newtheorem{theorem}{Theorem}[section]
\newtheorem{lemma}[theorem]{Lemma}
\newtheorem{corollary}[theorem]{Corollary}
\newtheorem{conjecture}[theorem]{Conjecture}
\newtheorem{proposition}[theorem]{Proposition}
\theoremstyle{definition}
\newtheorem{definition}[theorem]{Definition}
\newtheorem{example}[theorem]{Example}
\theoremstyle{remark}
\newtheorem{remark}[theorem]{Remark}
\numberwithin{equation}{section}
\date{\today}
\newcommand{\Sp}{\mathbb{S}}
\newcommand{\R}{\mathbb{R}}
\begin{document}

\title{Log-optimal configurations on the sphere}

\author{P. D. Dragnev}
\address{Department of Mathematical Sciences, Indiana University-Purdue University Fort Wayne, Fort Wayne, Indiana 46805}
\email{dragnevp@ipfw.edu}
\thanks{This work is supported in part by Simons CGM no. 282207. The author would like to also thank Erwin Schr\"{o}dinger International Institute for its hospitality during his stay, when part of this manuscript was written.}


\subjclass{Primary 74G05, 74G65; Secondary 31B15, 31C15}

\dedicatory{This paper is dedicated to Ed Saff on the occasion of his 70-th Birthday.}

\keywords{Discrete minimal energy, Logarithmic energy, Elliptic Fekete points, sharp configurations}

\begin{abstract}
In this article we consider the distribution of $N$ points
on the unit sphere $\mathcal{S}^{d-1}$ in ${\bf R}^d$ interacting via logarithmic potential. A characterization theorem of the stationary configurations is derived when $N=d+2$ and two new log-optimal configurations minimizing the logarithmic energy are obtained for six points on $\Sp^3$ and seven points on $\Sp^4$. A conjecture on the log-optimal configurations of $d+2$ points on $\Sp^{d-1}$ is stated and three auxiliary results supporting the conjecture are presented.
\end{abstract}

\maketitle

%

\section{Introduction and main results}
\setcounter{equation}{0}

Minimal energy configurations have wide ranging applications in various fields of science, such as crystallography, nanotechnology, material science, information technology, wireless communications, complexity of algorithms, etc. In the last twenty years Ed Saff has been one of the leaders in the field. His contributions are numerous and his enthusiasm for the subject contagious, as experienced first-hand by the author himself.

In this contribution we shall characterize the stationary configurations (or "ground states") of the discrete logarithmic energy on $\Sp^{d-1}$ when $N=d+2$. As a consequence of our characterization theorem, we will present a simplified proof of the log-optimality on $\Sp^2$ of the bipyramid, the first so-called non-sharp configuration in dimension $d=3$ as defined by Cohn and Kumar in \cite{CK}. We shall also derive rigorously two new log-optimal configurations, six points on $\Sp^3$ and seven points on $\Sp^4$, which are the first non-sharp configurations in dimensions $d=4$ and $d=5$ respectively. This leads us to state a conjecture on the log-optimal configuration of $d+2$ points on $\Sp^{d-1}$, supported by three auxiliary results used in establishing the log-optimality of the aforementioned configurations. Some of the results were previously announced in the extended abstract \cite{D}. Here we provide all of the proofs.

For every ${\bf x}, {\bf y} \in \mathbb{R}^d $ let ${\bf x}\cdot {\bf y}=x_1 y_1+\dots+x_d y_d$ be the inner product and $|{\bf x} |=({\bf x}\cdot {\bf x})^{1/2}$ the Euclidean distance. Denote the unit sphere with $\mathbb{S}^{d-1}:=\{ {\bf x}\in \mathbb{R}^d : |{\bf x}|=1\}$. For any
$N$-point configuration $\omega_N = \{ {\bf x}_1,{\bf x}_2, \dots, {\bf x}_N \} \subset \mathbb{S}^{d-1}$ the points $\{ {\bf x}_i \}$ and the segments $\{ {\bf x}_i{\bf x}_j \}_{i\not= j}$ will be called respectively {\it vertices} and {\it edges} of the configuration. Throughout $d_{i,j}:=|{\bf x}_i - {\bf x}_j|^2$ will denote the square of the length of the corresponding edge. Here we are interested in configurations of points $\omega_N^*$ such that

\begin{equation}\label{prod}
P (\omega_N^* )={\mathcal P}(N,d):=\max_{\omega_N \subset \mathbb{S}^{d-1} } P(\omega_N ), \quad
P (\omega_N ) =\prod_{1\leq i<j\leq N} d_{i,j}
\end{equation}

\noindent These configurations minimize the logarithmic energy

\begin{equation}\label{LogEn}
E_{\rm log} (\omega_N ):= \sum_{1\leq i \not= j\leq N} \log
\frac{1}{|{\bf x}_i-{\bf x}_j|}=- \log P(\omega_N),
\end{equation}

\noindent and hence are called {\it log-optimal} configurations.

More generally, the {\it optimal Riesz $s$-energy} configurations minimize (maximize when $s<0$) the {\it $s$-energy}
\begin{equation}\label{s-energy}
\mathcal{E}_s (N,d) := \min_{\omega_N \subset {\Sp}^{d-1}} E_s (\omega_N) ,
\end{equation}
where
\[
E_s (\omega_N ):= \sum_{1\leq i \not= j\leq N}
\frac{1}{|{\bf x}_i-{\bf x}_j|^s} .
\]
The log-optimal configurations are the limiting case of the optimal $s$-energy configurations as $s \to 0$. As $s \to \infty $ we arrive at {\it best packing configurations} (centers of $N$ identical spherical caps with maximal radius packed on the unit sphere, a problem referred to as {\em Tammes problem}). For further reference on discrete minimal energy problems see \cite{CK}, \cite{ConSlo}, \cite{HS}, \cite{MKS}, \cite{RSZ1}, \cite{SK1}.

The question of finding log-optimal configurations was posed by Whyte in 1952 \cite{W} (for $d=3$), yet only very few are known. If $d=2$ this is a well studied problem of Fekete points on the circle, and the solution is the regular $N$-gon. If $d=3$, log-optimal points are referred to as elliptic Fekete points and the solution is known for $N=1-6$, and $12$. For $N=1-4$ the solution is trivial (regular simplex of dimension $0-3$). For $N=12$ Andreev \cite{A} showed that the regular icosahedron is an optimal configuration. He used the fact that the configuration is a spherical 5-design, a technique that follows closely the results of Kolushov and Yudin \cite{KY}, where the authors provided a lower bound for $ {\mathcal P} (N,d)$, which they showed is attained for the regular $d$-simplex when $N=d+1$, and the generalized octahedron when $N=2d$. The analysis is based on the fact that these special configurations are suitable spherical designs (for definition of spherical designs see \cite{DGS}). Cohn and Kumar \cite{CK} subsequently defined such designs, having $m$ distinct inner products and strength $2m-1$, as {\em sharp configurations}, and proved that all sharp configurations are universally optimal, i.e. optimal for a large class of potential energies that includes the Riesz $s$-energy case.

However, as noted by Kolushov and Yudin for the particular case when $N=5$ and $d=3$, when the optimal configuration is not a design of sufficiently high degree, then the method fails. Indeed, for  $\Sp^2$ the first case when sharp configuration doesn't exist is when $N=5$.
The optimal configurations in this case are rigorously found in only few cases. The log-optimal solution was found in \cite{DLT}, and is the triangular bipyramid, i.e. two points at the Poles and three points forming an equilateral triangle on the Equator. In \cite{HoS} Hou and Shao utilize computer-aided proof to show that the bipyramid maximizes the sum of all distances (so-called {\em Fejes-T\'{o}th problem}) between five points on the sphere $\Sp^2$. Richard Schwartz solved the {\em Thomson's problem} $s=1$, as well as the case $s=2$ by also employing complex computational methods. The solution is also the bipyramid. However, in a very elegant paper Bondarenko, Hardin and Saff \cite{BHS} derived that as $s\to \infty$ any limiting configuration of Riesz $s$-energy optimal $5$-point configurations must be a square-based best-packing pyramid (North Pole and a square on the Equator). Indeed, it is an easy calculus problem to see that a certain square-based pyramid will have smaller energy for large enough $s$. It is a long standing conjecture by Melnik, Knop and Smith \cite{MKS} that for $s\leq s_0\approx 15.048...$ the bipyramid minimizes the Riesz $s$-energy, while for $s>s_0$ a square-based pyramid with altitude of the square base depending on $s$ is the minimizer. We should point out that in a recent paper Tumanov \cite{T} shows without computer use that the bipyramid has minimal biquadratic energy.

Our goal here is to characterize the {\em stationary configurations}, also called {\em ground states}, for which the gradient of the discrete logarithmic energy \eqref{LogEn} vanishes. The following vector equations, describing the ground states are found in \cite{BBP} for $d=3$ and \cite{DLT} for $d>3$.

\vskip 1mm

\begin{proposition} Let $\omega_N=\{ {\bf x}_1, {\bf x}_2, \dots , {\bf x}_N\}$ be a stationary logarithmic configuration on the unit sphere $\Sp^{d-1}$ in ${\R}^d$. Then the following force conditions hold:
\begin{equation}\label{VecEqs}
\sum_{j\not= i} \frac{{\bf x}_i- {\bf x}_j}{d_{i,j}} = \frac{N-1}{2}\,
{\bf x}_i\ \ \ i=1,\dots, N.
\end{equation}
Moreover, the center of mass of the configuration coincides with the center of the sphere ${\bf 0}$ and
\begin{equation}\label{sum_2N}
\sum_{j\not= i} d_{i,j} = 2N \ \ \ i=1,\dots,N.
\end{equation}
\end{proposition}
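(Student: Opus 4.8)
The plan is to compute the surface gradient of the logarithmic energy \eqref{LogEn} and impose that it vanish at a stationary configuration. Fixing an index $i$, the only terms of $E_{\log}(\omega_N)$ that depend on ${\bf x}_i$ are the $2(N-1)$ ordered pairs containing $i$, and their sum equals $-\sum_{j\neq i}\log d_{i,j}$ since $|{\bf x}_i-{\bf x}_j|^2=d_{i,j}$. Differentiating in $\R^d$ gives $\nabla_{{\bf x}_i}E_{\log}=-2\sum_{j\neq i}({\bf x}_i-{\bf x}_j)/d_{i,j}$. Stationarity on $\Sp^{d-1}$ means that this gradient has no component tangent to the sphere at ${\bf x}_i$; since $\Sp^{d-1}$ is a level set of $|{\bf x}|^2$, its normal at ${\bf x}_i$ is ${\bf x}_i$ itself, so the condition is that $\sum_{j\neq i}({\bf x}_i-{\bf x}_j)/d_{i,j}=\mu_i{\bf x}_i$ for some scalar $\mu_i\in\R$ (a Lagrange multiplier), for each $i=1,\dots,N$.

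To evaluate $\mu_i$ I take the inner product of this identity with ${\bf x}_i$. Because $|{\bf x}_i|=|{\bf x}_j|=1$, one has the elementary relation $d_{i,j}=|{\bf x}_i-{\bf x}_j|^2=2(1-{\bf x}_i\cdot{\bf x}_j)$ and hence ${\bf x}_i\cdot({\bf x}_i-{\bf x}_j)=1-{\bf x}_i\cdot{\bf x}_j=d_{i,j}/2$. Therefore the dot product collapses to $\mu_i=\sum_{j\neq i}\tfrac12=(N-1)/2$, independent of $i$, which is precisely the force condition \eqref{VecEqs}.

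It remains to derive the two consequences. Summing \eqref{VecEqs} over $i=1,\dots,N$, the left-hand side vanishes: each unordered pair $\{i,j\}$ contributes $({\bf x}_i-{\bf x}_j)/d_{i,j}+({\bf x}_j-{\bf x}_i)/d_{j,i}={\bf 0}$ by the symmetry $d_{i,j}=d_{j,i}$. The right-hand side is $\tfrac{N-1}{2}\sum_{i=1}^N{\bf x}_i$, so $\sum_{i=1}^N{\bf x}_i={\bf 0}$, i.e. the center of mass coincides with ${\bf 0}$. Finally, using this, the identity $d_{i,j}=2(1-{\bf x}_i\cdot{\bf x}_j)$, and $\sum_{j\neq i}{\bf x}_j=-{\bf x}_i$, I obtain $\sum_{j\neq i}d_{i,j}=2(N-1)-2\,{\bf x}_i\cdot\sum_{j\neq i}{\bf x}_j=2(N-1)+2|{\bf x}_i|^2=2N$, which is \eqref{sum_2N}.

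There is no real obstacle here; the only points needing care are the bookkeeping of the factor $2$ produced by the ordered pairs in $E_{\log}$ and the systematic use of the normalization $|{\bf x}_i|=1$ to rewrite $d_{i,j}=2(1-{\bf x}_i\cdot{\bf x}_j)$, which is exactly what forces the Lagrange multiplier to be the universal constant $(N-1)/2$ rather than a configuration-dependent quantity.
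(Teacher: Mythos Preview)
Your proof is correct. The paper itself does not supply a proof of this proposition; it simply attributes the result to \cite{BBP} (for $d=3$) and \cite{DLT} (for $d>3$). Your Lagrange-multiplier derivation, together with the identity $d_{i,j}=2(1-{\bf x}_i\cdot{\bf x}_j)$ to pin down the multiplier as $(N-1)/2$, is exactly the standard argument and fills in what the paper leaves to the references.
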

\vskip 1mm
In general, stationary $s$-energy configurations satisfy similar vector equations, but the coefficients on the right-hand side of \eqref{VecEqs} vary with $i$, which adds significant difficulty.
\vskip 1mm

To formulate our characterization theorem we introduce some notions from dimension theory. Given $k$ points ${\bf x}_1,\dots, {\bf x}_k$ in ${\R}^d$, let ${\bf u}_i:= {\bf x}_i-{\bf x}_1$, $i=2,\dots,k$. The hyperplane spanned by the points $\{ {\bf x}_i\}$ is the set
\begin{equation}\label{hyperplane}
G_{\{ {\bf x}_i \}} :=\{ {\bf w} \in {\R}^d\ : \ {\bf w}={\bf x}_1+\alpha_2 {\bf u}_2+\cdots +\alpha_k {\bf u}_k,\ \alpha_i \in {\R}\}.\end{equation}
Clearly, the choice of ${\bf x}_1$ here is not restrictive and can be substituted with any other point in the hyperplane. The dimension of this hyperplane is the dimension of ${\rm span}\{ {\bf u}_2,\dots,{\bf u}_k\}$. Observe that the span of the points $\{ {\bf x}_i\}$ is not the same as the span of the vectors $\{ {\bf x}_i\}$ (unless ${\bf 0}\in G_{\{ {\bf x}_i \}}$).

\begin{definition}\label{mirror}
Let $\omega_N =\{ {\bf x}_1,\dots,{\bf x}_N\} \subset \mathbb{S}^{d-1}$. Two vertices ${\bf x}_i$ and ${\bf x}_j$ are called {\it mirror related} (we write ${\bf x}_i \sim {\bf x}_j$), if $d_{i, k}=d_{j, k}$, for every $k \not= i,j$. A configuration is called {\em degenerate} if the points of the configuration do {\sl not} span the whole ${\R}^d$.
\end{definition}

\begin{remark}\label{mirror_remark}
Observe that if ${\bf x}_i \sim {\bf x}_j$, then the hyperplane spanned by $\omega_N\setminus\{ {\bf x}_i, {\bf x}_j \}$ is contained in the orthogonal bisector hyperspace of the segment ${\bf x}_{i}  {\bf x}_{j}$. The points ${\bf x}_{i}$ and ${\bf x}_{j}$ are then mirror images of each other with respect to this hyperspace. This explains our choice of terms.
\end{remark}

\begin{proposition}\label{equivalence}
The mirror relation property in Definition \ref{mirror} is an equivalence relation.
\end{proposition}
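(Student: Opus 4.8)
The plan is to verify directly the three defining properties of an equivalence relation from the distance condition "$d_{i,k}=d_{j,k}$ for every $k\neq i,j$" in Definition~\ref{mirror}. Reflexivity ${\bf x}_i\sim{\bf x}_i$ holds trivially (there is no constraint to check), and symmetry is immediate: the condition "$d_{i,k}=d_{j,k}$ for every $k\neq i,j$" is literally unchanged when $i$ and $j$ are interchanged, using $d_{i,k}=d_{k,i}$. So the only substantive point is transitivity.

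For transitivity, assume ${\bf x}_i\sim{\bf x}_j$ and ${\bf x}_j\sim{\bf x}_l$ with $i,j,l$ pairwise distinct (if two of the three indices coincide the conclusion follows from reflexivity and symmetry), and we must show $d_{i,k}=d_{l,k}$ for every $k\neq i,l$. For an index $k\notin\{i,j,l\}$ this is routine chaining: $d_{i,k}=d_{j,k}$ from ${\bf x}_i\sim{\bf x}_j$ and $d_{j,k}=d_{l,k}$ from ${\bf x}_j\sim{\bf x}_l$, hence $d_{i,k}=d_{l,k}$.

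The one case that cannot be handled by naive chaining, and which I expect to be the only real obstacle, is $k=j$, since one cannot route the equality through the vertex ${\bf x}_j$ itself. The trick is to evaluate each hypothesis at the "third" index instead. Since $l\neq i,j$, applying ${\bf x}_i\sim{\bf x}_j$ with $k=l$ gives $d_{i,l}=d_{j,l}$; since $i\neq j,l$, applying ${\bf x}_j\sim{\bf x}_l$ with $k=i$ gives $d_{i,j}=d_{i,l}$. Combining these and using symmetry of $d$ yields $d_{i,j}=d_{i,l}=d_{j,l}=d_{l,j}$, which is precisely the missing identity $d_{i,j}=d_{l,j}$. Therefore $d_{i,k}=d_{l,k}$ for all $k\neq i,l$, i.e. ${\bf x}_i\sim{\bf x}_l$, proving transitivity.

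Finally I would note that this crosswise evaluation is consistent with the geometric picture of Remark~\ref{mirror_remark}: if ${\bf x}_i,{\bf x}_j$ is a mirror pair and ${\bf x}_j,{\bf x}_l$ is a mirror pair, then every remaining vertex is equidistant from all three of ${\bf x}_i,{\bf x}_j,{\bf x}_l$, so ${\bf x}_i$ and ${\bf x}_l$ must themselves be mirror images across the bisecting hyperplane of the other vertices. This remark is not needed for the proof but explains why transitivity should hold.
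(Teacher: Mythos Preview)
Your proof is correct and follows essentially the same approach as the paper: reflexivity and symmetry are dismissed as immediate, and transitivity is handled by routine chaining for indices outside $\{i,j,l\}$ together with the same ``crosswise'' evaluation (applying each hypothesis at the remaining third index) to obtain $d_{i,j}=d_{i,l}=d_{j,l}$ and hence the missing case $k=j$. The paper also records, as you implicitly do, that this triple equality of distances generalizes to show each equivalence class forms a regular simplex.
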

\begin{proof} We only need to show the transitivity property, namely that ${\bf x}_{i} \sim {\bf x}_{j}$ and ${\bf x}_{j} \sim {\bf x}_{k}$ implies ${\bf x}_{i} \sim {\bf x}_{k}$. Indeed, if $s\not= i,j,k$, then $d_{s,i}=d_{s,j}$ and $d_{s,j}=d_{s,k}$ shows that $d_{s,i}=d_{s,k}$. For $s=j$ we have $d_{j,i}=d_{i,k}$ from ${\bf x}_{j} \sim {\bf x}_{k}$, and $d_{j,k}=d_{i,k}$ from ${\bf x}_{i} \sim {\bf x}_{j}$, which yields the transitivity. Moreover, if ${\bf x}_{i} \sim {\bf x}_{j}\sim {\bf x}_{k}$, then $d_{i,j}=d_{i,k}=d_{j,k}$, which can be generalized to make the important conclusion that a collection of points in an equivalence class forms a regular simplex.
\end{proof}

Next we formulate our characterization theorem.
\begin{theorem}\label{characterization}
Let $N=d+2$ and let the configuration $\omega_N$ be stationary. Then at least one of the following three possibilities occurs:
\begin{itemize}
\item[(a)] The configuration $\omega_N $ is degenerate;
\item[(b)] There exists a vertex with all edges stemming out being equal;
\item[(c)] Every vertex is mirror related to another vertex.
\end{itemize}
\end{theorem}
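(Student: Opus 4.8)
The plan is to turn the force conditions \eqref{VecEqs} into a single matrix equation and then read off all the mutual distances from the affine dependence relation of $\omega_N$.

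\smallskip
\textit{Reduction.} Suppose $\omega_N$ is not degenerate (otherwise (a) holds), so ${\bf x}_1,\dots,{\bf x}_N$ span $\R^d$. Rewrite \eqref{VecEqs} as $BX=0$, where $X$ is the $N\times d$ matrix with rows ${\bf x}_i^{\mathsf T}$ and $B=(b_{i,j})$ is the symmetric matrix with $b_{i,j}=-1/d_{i,j}$ for $i\ne j$ and $b_{i,i}=\sum_{k\ne i}1/d_{i,k}-\frac{N-1}{2}$. The columns of $X$ span a $d$-dimensional subspace lying in $\ker B$ (here I use that the centroid is ${\bf 0}$, so the linear span equals the affine span), while $B{\bf 1}=-\frac{N-1}{2}{\bf 1}\ne{\bf 0}$; hence $\operatorname{rank}B=2$ (it cannot be $1$, since a rank-one symmetric $B$ would force all $d_{i,j}$ equal, which is impossible for $d+2$ points in $\R^d$). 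Since ${\bf 1}$ is an eigenvector of $B$ and $\operatorname{im}B$ is the left null space of $X$, we obtain $B=\alpha{\bf 1}{\bf 1}^{\mathsf T}+\beta\,{\bf v}{\bf v}^{\mathsf T}$ with $\alpha=-\frac{N-1}{2N}$, $\beta\ne 0$, and ${\bf v}=(v_1,\dots,v_N)$ the essentially unique affine dependence ($\sum_i v_i{\bf x}_i={\bf 0}$, $\sum_i v_i=0$). Comparing off-diagonal entries gives the master formula $d_{i,j}=-1/(\alpha+\beta v_iv_j)$ for $i\ne j$.

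\smallskip
\textit{Trichotomy.} If some $v_i=0$, then $d_{i,k}=-1/\alpha$ for all $k\ne i$, so (b) holds. Otherwise all $v_i\ne 0$, and the master formula gives, for $i\ne j$ and $k\ne i,j$, that $d_{i,k}=d_{j,k}$ iff $v_i=v_j$; since $N\ge 3$ this yields ${\bf x}_i\sim{\bf x}_j\iff v_i=v_j$. If every value $v_i$ is repeated, (c) holds. So it remains to rule out the case that some value, say $v_1$, occurs exactly once. Let $C_1=\{{\bf x}_1\},C_2,\dots,C_m$ be the classes of $\sim$, with sizes $n_l$ and common values $w_l$ (distinct, nonzero, $\sum_l n_l w_l=0$). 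By Proposition \ref{equivalence} each $C_l$ is a regular simplex; a short computation then shows that the affine hulls of the $C_l$ are mutually orthogonal, that their centroids ${\bf c}_l$ satisfy $\sum_l n_l{\bf c}_l={\bf 0}$ and $\sum_l n_l w_l{\bf c}_l={\bf 0}$ (the latter coming from $\sum_i v_i{\bf x}_i={\bf 0}$), that $\abs{{\bf c}_1}=1$, and that $d_{i,j}=2-2{\bf c}_l\cdot{\bf c}_{l'}$ whenever $i\in C_l$, $j\in C_{l'}$, $l\ne l'$; non-degeneracy forces $\dim\operatorname{span}\{{\bf c}_l\}=m-2$. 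If $m\le 2$ then ${\bf x}_1$ has all of its edges equal, i.e. (b) holds, so I may assume $m\ge 3$.

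\smallskip
\textit{The main obstacle.} What remains — and this is the technical heart — is to show that a non-degenerate stationary configuration with $m\ge 3$ classes one of which is a singleton cannot exist. The key lemma is: \emph{if ${\bf c}_l={\bf 0}$ for some $l$, then $m\le 2$}; indeed ${\bf c}_l={\bf 0}$ makes $d_{i,j}=2$ for every $i\in C_l$ and $j\notin C_l$, so $\alpha+\beta w_l w_{l'}=-\tfrac12$ for all $l'\ne l$, and since $\beta w_l\ne 0$ the values $w_{l'}$ ($l'\ne l$) must coincide. Hence no centroid vanishes. One then writes out the component in $W=\operatorname{span}\{{\bf c}_l\}$ of the force condition \eqref{VecEqs} at a vertex of a non-singleton class, expresses the inter-class distances through ${\bf c}_l\cdot{\bf c}_{l'}$ and the intra-class distance as $\tfrac{2n_l}{n_l-1}(1-\abs{{\bf c}_l}^2)$, and eliminates using $\sum_l n_l{\bf c}_l={\bf 0}$ and $\sum_l n_l w_l{\bf c}_l={\bf 0}$; after clearing denominators this collapses to a relation that can hold only if $w_l=w_{l'}$ for some $l\ne l'$, or $w_1=0$, or ${\bf c}_l={\bf 0}$ for some $l$ — each contradicting the distinctness of the $w_l$, the assumption $v_1\ne 0$, or the key lemma. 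Carrying out this elimination for general $m$ is where the real work lies; already for $d=3$ it reduces to a direct computation showing that the only stationary configuration possessing a singleton class is the one in which two classes merge, i.e.\ the triangular bipyramid with $m=2$.
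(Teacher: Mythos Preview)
Your setup is elegant and genuinely different from the paper's: writing the force conditions as $BX=0$, identifying $\operatorname{im}B$ with the two-dimensional left null space of $X$, and deducing the master formula $d_{i,j}=-1/(\alpha+\beta v_iv_j)$ from the spectral decomposition $B=\alpha\,{\bf 1}{\bf 1}^{\mathsf T}+\beta\,{\bf v}{\bf v}^{\mathsf T}$ is clean, and the identification of mirror classes with level sets of the affine dependence ${\bf v}$ is illuminating. The reductions to $v_i=0\Rightarrow$ (b) and ``every $v_i$ repeated'' $\Rightarrow$ (c) are correct, as is your key lemma that ${\bf c}_l={\bf 0}$ forces $m\le 2$.

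But the proof is incomplete. In the ``main obstacle'' paragraph you explicitly say that ``carrying out this elimination for general $m$ is where the real work lies,'' and you do not carry it out. What you give is a plan (project the force equations onto $W=\operatorname{span}\{{\bf c}_l\}$, use the two mass relations, clear denominators), together with an assertion that the result ``can hold only if $w_l=w_{l'}$ for some $l\ne l'$, or $w_1=0$, or ${\bf c}_l={\bf 0}$.'' Neither the surviving relation nor the reason it forces one of those alternatives is written down. This is exactly the case the theorem hinges on, and without it the trichotomy is not established.

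The paper closes precisely this gap by a rather different mechanism. It fixes the singleton vertex ${\bf x}_N$, projects stereographically from it, and encodes \eqref{VecEqs} as $MA=0$ with $\operatorname{rank}A=d-1$, forcing $\operatorname{rank}M\le 2$. Each vanishing $3\times 3$ minor of $M$ then yields either $d_{N,j}=d_{i,j}$ for all $j$ (i.e.\ ${\bf x}_N\sim{\bf x}_i$, contradicting the singleton hypothesis) or the scalar relation
\[
2N(2N-d_{N,i})=(N-1)\sum_{l\ne i,N} d_{N,l}\,d_{i,l}.
\]
Summing these over $i$ and using \eqref{sum_2N} collapses everything to $(\sum_l d_{N,l})^2=(N-1)\sum_l d_{N,l}^2$, the equality case of the arithmetic--quadratic mean inequality, so all $d_{N,l}$ coincide and (b) holds. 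In your language this forces $m\le 2$, eliminating the case you left open. If you want to salvage your approach, you would need an analogue of this summation-and-mean-inequality step phrased in terms of the $w_l$ and ${\bf c}_l$; the sketch you give does not yet supply one.
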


The following strict monotonicity property of $\mathcal{P} (N,d)$ shows that degenerate stationary logarithmic configurations are not log-optimal when $N\geq d+1$.
\vskip 1mm

\begin{theorem}\label{monotonicity}
{\it For fixed $N$, the sequence $ \mathcal{P} (N,d)$ is strictly increasing for $d<N$ and
$\mathcal{P} (N,d)=\mathcal{P} (N,N-1)$ for $d\geq N$.}
\end{theorem}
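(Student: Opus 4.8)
The plan is to think of $\Sp^{d-1}$ as sitting inside $\Sp^{d}$ as the "equatorial" subsphere $\{x\in\R^{d+1}: x_{d+1}=0,\ |x|=1\}$. This immediately gives the weak inequality $\mathcal{P}(N,d)\le\mathcal{P}(N,d+1)$: any configuration $\omega_N\subset\Sp^{d-1}$ is also a configuration on $\Sp^{d}$ with exactly the same pairwise distances, hence the same value of $P$, so the maximum over the larger sphere is at least as big. The content of the theorem is therefore the \emph{strict} inequality when $d<N$, together with the stabilization statement for $d\ge N$.

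For the stabilization, the key observation is that $N$ points in $\R^{d+1}$ span an affine subspace of dimension at most $N-1$, so any configuration on $\Sp^{d}$ with $d\ge N$ lies on a lower-dimensional great subsphere, which (after an isometry) is a copy of $\Sp^{N-2}$; thus $\mathcal{P}(N,d)\le\mathcal{P}(N,N-1)$, and combined with the weak monotonicity we get equality for all $d\ge N-1$. For the strict inequality when $d<N$, I would argue by contradiction: suppose $\mathcal{P}(N,d)=\mathcal{P}(N,d+1)$ and let $\omega_N^*\subset\Sp^{d}$ be log-optimal on $\Sp^{d}$. Since $d<N$, the dimension count shows $\omega_N^*$ can genuinely span $\R^{d+1}$, but it need not. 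If it does \emph{not} span $\R^{d+1}$, then it is degenerate as a configuration on $\Sp^{d}$; I would then use a perturbation argument — push one vertex slightly off the spanning hyperplane in the extra coordinate direction and renormalize — and show the product $P$ strictly increases, contradicting optimality on $\Sp^{d}$. (Equivalently: a log-optimal configuration must be stationary, and by the reasoning around Theorem \ref{characterization} and the companion monotonicity remark, a degenerate stationary configuration cannot be optimal on the ambient sphere.) If $\omega_N^*$ \emph{does} span $\R^{d+1}$, then it cannot be isometrically placed on $\Sp^{d-1}$, so its optimal value $\mathcal{P}(N,d+1)$ is not attained by any configuration on $\Sp^{d-1}$; but every configuration on $\Sp^{d-1}$ has $P(\omega_N)\le\mathcal{P}(N,d)=\mathcal{P}(N,d+1)$, and by compactness the maximum on $\Sp^{d-1}$ is attained, giving a configuration on $\Sp^{d-1}\subset\Sp^{d}$ achieving $\mathcal{P}(N,d+1)$ — contradiction with its spanning $\R^{d+1}$.

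The concrete perturbation is the step I would spell out carefully. Fix an optimal $\omega_N\subset\Sp^{d-1}\subset\Sp^{d}$ (using the equatorial embedding, with last coordinate $0$). Replace ${\bf x}_1=(y_1,0)$ by ${\bf x}_1(t)=(y_1\cos t,\sin t)\in\Sp^{d}$ and keep the other vertices fixed. Then for $j\ne 1$, writing $c_j={\bf x}_1\cdot{\bf x}_j=y_1\cdot y_j$, one has $d_{1,j}(t)=2-2c_j\cos t$, while all other $d_{i,j}$ are unchanged, so
\[
\frac{d}{dt}\log P(\omega_N(t))\Big|_{t=0}=\sum_{j\ne1}\frac{2c_j\sin t}{2-2c_j\cos t}\Big|_{t=0}=0,
\]
and the second derivative at $t=0$ equals $\sum_{j\ne1} c_j/(1-c_j)$. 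The remaining task is to show this second derivative is strictly positive for at least one choice of the vertex being moved (or, after averaging over which vertex is perturbed, in aggregate), which should follow from $\sum_{j\ne1}(1-c_j)=\tfrac12\sum_{j\ne1}d_{1,j}=N$ (by \eqref{sum_2N}, valid at a stationary — in particular optimal — configuration) together with convexity of $c\mapsto c/(1-c)$ on $(-1,1)$, i.e. a Jensen-type inequality forcing strict positivity unless all $c_j$ are equal; and the all-equal case is exactly possibility (b) of Theorem \ref{characterization}, which can be handled directly (it corresponds to a vertex whose complementary simplex is regular, and one checks such a configuration still admits an energy-increasing perturbation into the new dimension). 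The main obstacle is precisely ruling out this degenerate symmetric case and making the "at least one vertex works" step rigorous rather than just generic; I expect to resolve it by summing the second-variation expression over all $N$ choices of perturbed vertex and showing the total is strictly positive using \eqref{sum_2N} and $\sum_{i\ne j} c_{i,j}$-type identities coming from $\sum_i{\bf x}_i={\bf 0}$.
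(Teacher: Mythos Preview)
Your overall architecture (embed for weak monotonicity, stabilize by the affine-span bound, and for strictness perturb a degenerate optimum into the extra coordinate) is sound, but the concrete perturbation you propose does not work, and the Jensen step cannot rescue it. Take $N=4$, $d=2$: the optimal configuration on $\Sp^{1}$ is the square, with inner products $c_{2}=0$, $c_{3}=-1$, $c_{4}=0$ from any fixed vertex. Your second variation is
\[
\sum_{j\ne 1}\frac{c_{j}}{1-c_{j}}=0+\frac{-1}{2}+0=-\tfrac12<0,
\]
and by symmetry the same value arises for every vertex, so averaging over the choice of perturbed vertex still gives $-\tfrac12$. Thus the single-vertex out-of-hyperplane perturbation \emph{strictly decreases} $P$ to second order, even though the square is not log-optimal on $\Sp^{2}$. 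More generally, your Jensen bound only yields $\sum_{j\ne1}c_j/(1-c_j)\ge -(N-1)/N$, which is negative, so strict convexity cannot force positivity; and the ``all $c_j$ equal'' case you flag is the regular $(N-1)$-simplex, which cannot even sit on $\Sp^{d-1}$ when $d<N$, so that exception is not the obstruction. (Also, invoking Theorem~\ref{characterization} here is both circular, since the monotonicity theorem is used to discard degenerate optima in its applications, and inapplicable, since Theorem~\ref{characterization} is stated only for $N=d+2$.)

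The paper avoids this by perturbing \emph{two} vertices at once. After rotating so that ${\bf x}_{1}=(r,\sqrt{1-r^{2}},0,\dots,0)$ and ${\bf x}_{2}=(r,-\sqrt{1-r^{2}},0,\dots,0)$ inside $\{x_d=0\}$, one replaces them by their mirror images in the new direction, ${\bf x}_{1}'=(r,0,\dots,0,\sqrt{1-r^{2}})$ and ${\bf x}_{2}'=(r,0,\dots,0,-\sqrt{1-r^{2}})$. Then $|{\bf x}_1'-{\bf x}_2'|=|{\bf x}_1-{\bf x}_2|$, while for any other ${\bf x}_j=(c_1,c_2,\dots,c_{d-1},0)$ one has the elementary identity
\[
|{\bf x}_j-{\bf x}_1|^2\,|{\bf x}_j-{\bf x}_2|^2=(2-2c_1 r)^2-4c_2^{2}(1-r^{2})\le(2-2c_1 r)^2=|{\bf x}_j-{\bf x}_1'|^2\,|{\bf x}_j-{\bf x}_2'|^2,
\]
with equality iff $c_2=0$, i.e.\ iff $d_{1,j}=d_{2,j}$. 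Since $N>d$, the configuration on $\Sp^{d-1}$ is not the regular simplex, so some triple satisfies $d_{1,3}\ne d_{2,3}$, giving a strict increase in $P$. This two-point ``rotate the pair into the new dimension'' move is the missing idea; your one-point move cannot detect it because it does not preserve the product structure $d_{1,j}d_{2,j}$ that makes the inequality above sharp.
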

As is seen from the proof, it can be adapted to cover more general potential interaction, for example, the same is true for Riesz $s$-energy optimal points.

Next, we illustrate Theorem \ref{characterization} with the following classifications of the stationary configurations of $d+2$ points for dimensions $d=2$ and $d=3$.

\begin{example} Let $d=2$ ($N=4$). Then (a) and (b) are impossible and the only stationary configurations satisfy (c). There could be only two equivalence classes of two points each, which are easily seen to be the diagonals of a square.
\end{example}
\begin{corollary}[\cite{DLT}] \label{bipyramidOpt} The bipyramid is the unique up to rotation log-optimal configuration on $\Sp^2$.
\end{corollary}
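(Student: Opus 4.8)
The strategy is to use Theorem~\ref{characterization} with $d=3$, $N=5$ to enumerate all stationary configurations, show only one (up to rotation) is non-degenerate and not improvable, and identify it as the bipyramid. By Theorem~\ref{monotonicity}, since $N=5 > d+1=4$, any log-optimal configuration on $\Sp^2$ is non-degenerate, so possibility (a) of Theorem~\ref{characterization} is excluded for the optimizer. This leaves cases (b) and (c), and in fact a log-optimal configuration is certainly stationary, so one of these must hold.

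First I would handle case (c): every vertex is mirror related to another vertex. Since the mirror relation is an equivalence relation (Proposition~\ref{equivalence}) and each class is a regular simplex (from the proof of that proposition), the five vertices partition into classes each of size $\geq 2$. With five points the only possibilities are one class of size $5$, one of size $2$ plus one of size $3$, or classes of sizes forcing a degenerate (planar or lower) configuration. A class of size $5$ would be a regular $4$-simplex, which cannot embed in $\R^3$, hence is degenerate — excluded. So the partition is $2+3$: two points $\{{\bf x}_1,{\bf x}_2\}$ forming an edge, and three points $\{{\bf x}_3,{\bf x}_4,{\bf x}_5\}$ forming an equilateral triangle. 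By Remark~\ref{mirror_remark}, the line through ${\bf x}_1,{\bf x}_2$ is perpendicular to the plane of the triangle and passes through its center; by the center-of-mass condition \eqref{sum_2N} (center of mass at ${\bf 0}$) and symmetry, the triangle lies in a horizontal plane and ${\bf x}_1,{\bf x}_2$ are symmetric about it. Imposing the force equations \eqref{VecEqs} (equivalently \eqref{sum_2N}) then pins down the one remaining free parameter (the height of the triangle's plane): a short computation forces the triangle onto the equator and ${\bf x}_1,{\bf x}_2$ onto the poles, i.e. the triangular bipyramid.

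Next I would dispose of case (b): some vertex, say ${\bf x}_1$, has all four edges to the other vertices equal, $d_{1,j}=c$ for $j=2,3,4,5$. Then ${\bf x}_2,\dots,{\bf x}_5$ lie on a circle (the intersection of $\Sp^2$ with a plane), so they form an inscribed quadrilateral. The force equation \eqref{VecEqs} at ${\bf x}_1$ reads $\frac{1}{c}\sum_{j\neq 1}({\bf x}_1-{\bf x}_j) = 2{\bf x}_1$, and since $\sum_{j\neq 1}{\bf x}_j = -{\bf x}_1$ by the center-of-mass condition, this gives $\frac{5}{c}{\bf x}_1 = 2{\bf x}_1$, so $c = 5/2$. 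One then analyzes the remaining four coplanar points subject to their own force equations \eqref{VecEqs}; the key point is that this forces them to be a rectangle (by a pairing/mirror argument among the four), and then a square, and then the full configuration turns out to be either degenerate or to coincide with the bipyramid after relabeling, or else to have strictly smaller $P(\omega_N)$ than the bipyramid. Finally, comparing $P$-values of all surviving candidates shows the bipyramid is the strict maximizer, and its uniqueness up to rotation follows since every step pinned the configuration up to the orthogonal group.

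The main obstacle I anticipate is case (b): ruling out (or bounding the energy of) the square-based pyramid family cleanly, since — as the introduction emphasizes via the Melnik–Knop–Smith conjecture and the Bondarenko–Hardin–Saff result — square-based pyramids are genuine competitors for Riesz energies and only lose to the bipyramid in the logarithmic case by a quantitative margin. Extracting that margin requires either an explicit evaluation of $P$ on the one-parameter pyramid family and a calculus comparison with the bipyramid, or a more clever argument showing the pyramid fails to be stationary for $N=5$ altogether; I expect the cleanest route is to show that case (b) with the constraint $c=5/2$ and the force equations at the base vertices is simply inconsistent unless the configuration degenerates or reduces to the bipyramid, so that (b) contributes no new stationary configuration. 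The case (c) analysis, by contrast, is essentially forced and should be routine.
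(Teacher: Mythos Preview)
Your handling of case (c) is essentially the paper's: the $2+3$ partition into a segment and an equilateral triangle, orthogonality from the mirror relation, and the center-of-mass condition forcing the triangle onto the equator and the segment onto the poles. Likewise, disposing of case (a) via Theorem~\ref{monotonicity} is fine and is how the paper treats degenerate configurations generally.

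The genuine gap is your treatment of case (b). Your ``cleanest route''---hoping the force equations at the base vertices are inconsistent so that (b) contributes no new stationary configuration---fails: the square-based pyramid with apex at the North Pole and square base in the plane $x_3=-1/4$ \emph{is} a bona fide stationary logarithmic configuration, distinct from the bipyramid and non-degenerate. (Your own derivation $c=5/2$ together with the base analysis leads precisely to it; the paper states this explicitly as $\omega_5^p$.) So (b) cannot be emptied out by a consistency argument. What remains is exactly the quantitative comparison you listed as a fallback: one computes
\[
P(\omega_5^p)=\Bigl(\tfrac{5}{2}\Bigr)^{4}\Bigl(\tfrac{15}{8}\Bigr)^{4}\Bigl(\tfrac{15}{4}\Bigr)^{2}=\frac{3^{6}\,5^{10}}{2^{20}}\approx 6789.4,
\qquad
P(\omega_5^b)=4\cdot 2^{6}\cdot 3^{3}=6912,
\]
and the bipyramid wins strictly. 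This direct energy comparison among the (finitely many) stationary candidates is exactly what the paper does; there is no slicker way to eliminate the square pyramid in the logarithmic case, and your anticipated obstacle is the actual content of the step.
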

\begin{proof}
In this case all possibilities (a), (b) and (c) are occurring. The only degenerate stationary configuration $\omega_{5}^{r}$ is the regular pentagon. The only stationary configuration  $\omega_5^p$ satisfying (b) is the square pyramid with vertex at the North Pole and a square base in a horizontal plane of altitude $-1/4$.

If (c) holds, there could be only two equivalence classes, one with two points, a segment, and the other with three points, an equilateral triangle, which we orient horizontally. The two points from the segment have to be equidistant to the vertices of the equilateral triangle, so clearly they are the North and South Poles. The center of mass shows that the equilateral triangle lies on the equator. Comparing the energies we observe that the bipyramid configuration $\omega_5^b$ minimizes the energy \eqref{LogEn}, which is another proof of the result in \cite{DLT}.
\end{proof}

We note that numerical evidence supports the conjecture of Melnik et al. that the triangular bipyramid configuration $\omega_5^b$ is minimizing the $s$-energy for $s<15.048...$, while the square pyramid $\omega_5^{p,s}$ is optimal (the base altitude is adjusted with $s$). Remarkably, they are the two competing stationary configurations above.

We next present two log-optimal configurations that are new in the literature.

\begin{theorem}\label{log-optimal} (i) The log-optimal configuration on $\Sp^3$ is unique up to rotation and is given by two orthogonal equilateral triangles (simplexes) inscribed in great circles.
\begin{equation}\label{six}
\omega_{\{3,3\}}:=\left\{(\cos\frac{2k\pi}{3},\sin\frac{2k\pi}{3},0,0)\right\}_{k=0}^2 \cup \left\{(0,0,\cos\frac{2k\pi}{3},\sin\frac{2k\pi}{3})\right\}_{k=0}^2.\end{equation}

(ii) The log-optimal configuration on $\Sp^4$ is unique up to rotation and is given by two orthogonal simplexes, an equilateral triangle and a regular tetrahedron, inscribed in a great circle and a great $3$-D hypersphere.
\begin{equation}\label{seven}
\begin{split}
\omega_{\{3,4\}}:=
&\displaystyle{\left\{(\cos\frac{2k\pi}{3},\sin\frac{2k\pi}{3},0,0,0)\right\}_{k=0}^2 \cup} \\
& \displaystyle{\left\{(0,0,0,0,1),(0,0,\frac{2\sqrt{2}}{3}\cos\frac{2k\pi}{3},\frac{2\sqrt{2}}{3}\sin\frac{2k\pi}{3},-1/3)\right\}_{k=0}^2.}
\end{split}
\end{equation}
\end{theorem}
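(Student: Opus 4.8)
The plan is to leverage Theorem~\ref{characterization} together with Theorem~\ref{monotonicity} to reduce the search for a log-optimal configuration of $N=d+2$ points to a finite list of explicit candidates, and then compare logarithmic energies on that list. First I would observe that a log-optimal configuration is automatically stationary, so it satisfies one of (a), (b), (c). By Theorem~\ref{monotonicity}, degenerate configurations cannot be log-optimal when $N\geq d+1$, so (a) is eliminated in both cases $d=4$ and $d=5$. It then remains to analyze cases (b) and (c). In case (c), since the mirror relation is an equivalence relation (Proposition~\ref{equivalence}) and each class is a regular simplex with edges all equal, the configuration decomposes as a union of mutually orthogonal regular simplexes — more precisely, if the classes have sizes $k_1,\dots,k_m$ with $\sum k_i = d+2$, one shows using the force equations \eqref{VecEqs} and the requirement that the center of mass is ${\bf 0}$ that the simplexes lie in mutually orthogonal subspaces and each is centered at the origin, so each $k_i$-simplex spans a $(k_i-1)$-dimensional subspace and $\sum (k_i-1)=d+2-m\leq d$, forcing $m\geq 2$. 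The sizes and the common "latitude"/scaling of each simplex are then pinned down by \eqref{sum_2N}, leaving only finitely many combinatorial types: for $d=4$ the partitions of $6$ into at least two parts, and for $d=5$ the partitions of $7$.

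Next I would handle case (b): some vertex ${\bf x}_i$ has all $d+1$ edges from it equal, say $d_{i,j}=t$ for all $j\neq i$. Then the remaining $d+1$ points lie on a sphere (the intersection of $\Sp^{d-1}$ with a hyperplane orthogonal to ${\bf x}_i$), and the subconfiguration $\omega_N\setminus\{{\bf x}_i\}$ of $d+1$ points on that $(d-1)$-sphere must itself be stationary for the logarithmic energy on that smaller sphere (this follows by projecting the force equations). Invoking the known fact — Kolushov--Yudin, as cited, that the regular $d$-simplex is the log-optimal (indeed the unique stationary nondegenerate) configuration of $d+1$ points on $\Sp^{d-1}$ — the $d+1$ points form a regular simplex, so the whole configuration is a "simplex pyramid" over a regular $d$-simplex, with the apex ${\bf x}_i$ at a pole; its single free parameter (the altitude) is fixed by \eqref{sum_2N}. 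This is again one explicit candidate in each dimension. Note that case (b) and case (c) are not exclusive, but that is harmless since all we need is to list every configuration that can occur.

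Finally I would compute $E_{\rm log}$ (equivalently $P(\omega_N)$) for each candidate on the finite list and verify that the claimed configurations $\omega_{\{3,3\}}$ and $\omega_{\{3,4\}}$ strictly dominate all competitors, which also gives uniqueness up to rotation. For $d=4$ the candidates are: the $\{2,4\}$-type (a segment through the poles plus a regular tetrahedron on an equatorial $\Sp^2$), the $\{3,3\}$-type (two orthogonal equilateral triangles), the $\{2,2,2\}$-type (three mutually orthogonal diameters, i.e. the cross-polytope $6$ points on $\Sp^3$), and the simplex-pyramid from case (b); one checks $\{3,3\}$ wins. For $d=5$ the candidates are the $\{2,5\}$-, $\{3,4\}$-, $\{2,2,3\}$-types and the simplex-pyramid, and $\{3,4\}$ wins. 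The main obstacle I anticipate is the bookkeeping in case (c): proving rigorously that distinct mirror classes must lie in mutually orthogonal linear subspaces each passing through the origin, and that the only consistent solutions of the scalar constraints \eqref{sum_2N} are the balanced "regular simplex in each factor" configurations with the specific radii appearing in \eqref{six} and \eqref{seven} — in particular ruling out configurations where a class does not span a coordinate subspace through ${\bf 0}$. Once the candidate list is correct, the final energy comparison is a routine (if slightly tedious) computation with the explicit inner products $-1/2$ within a triangle, $-1/3$ within a tetrahedron, $0$ across orthogonal factors, and the mixed products determined by the radii.
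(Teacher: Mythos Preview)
Your overall architecture matches the paper's: use Theorem~\ref{characterization} plus Theorem~\ref{monotonicity} to eliminate~(a), then analyze (b) and (c). However, two of your concrete claims are wrong, and they are not mere bookkeeping.

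\medskip
\textbf{Case (b).} You write that the remaining $d+1$ points on the parallel sphere form a regular simplex by Kolushov--Yudin. That result concerns $d+1$ points on $\Sp^{d-1}$, but here the points live on a $(d-2)$-sphere: this is the problem of $(d-1)+2$ points on $\Sp^{d-2}$, i.e.\ the same problem one dimension down. For $d=4$ the optimal sub-configuration is the \emph{bipyramid} (Corollary~\ref{bipyramidOpt}), and for $d=5$ it is $\omega_{\{3,3\}}$ from part~(i). The paper packages this inductive step as Lemma~\ref{square-based}, and then still has to check the resulting ``simplex-pyramid over a bipyramid'' (resp.\ over $\omega_{\{3,3\}}$) loses to $\omega_{\{3,3\}}$ (resp.\ $\omega_{\{3,4\}}$).

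\medskip
\textbf{Case (c).} The assertion that the force equations and the center-of-mass condition force each mirror class to be centered at~${\bf 0}$ is false. For $d=4$ and the partition $\{2,2,2\}$, the stationary configurations are
\[
\{(u,\pm\sqrt{1-u^2},0,0)\}\cup\{(v,0,\pm\sqrt{1-v^2},0)\}\cup\{(w,0,0,\pm\sqrt{1-w^2})\},\quad 2u+2v+2w=0,
\]
with the three centers of mass lying on a common line through the origin but not (in general) at the origin; there is a genuine one-parameter family here, not just the cross-polytope. The paper deals with this via two lemmas: Lemma~\ref{masscenter} shows that \emph{if} some component has the origin in its affine span then the whole configuration must already be $\omega_{\{[d/2],[(d+1)/2]\}}$; Lemma~\ref{three-simplex} then treats the remaining three-component case by solving for the actual $(u,v,w)$ from the force equations and comparing $P(\Omega_{\{k,l,m\}})$ to a two-simplex competitor. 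Your candidate list (``$\{2,2,2\}$ is the cross-polytope,'' etc.) is therefore incomplete, and the step you flagged as an ``obstacle'' is in fact the heart of the argument rather than bookkeeping.
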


Based on this theorem and the auxiliary lemmas in section 3 we state the following conjecture.
Let $[\, \cdot \, ]$ be the greatest integer function.

\begin{conjecture} \label{conjecture} The log-optimal configuration of $d+2$ points on $\Sp^{d-1}$  is unique up to rotations and consists of two mutually orthogonal regular simplexes, a $[ d/2 ]$-simplex  and a
$[ (d+1)/2 ]$-simplex respectively, denoted as $\omega_{\{[ d/2 ],[ (d+1)/2 ]\}}$ The maximal product from \eqref{prod} is given by

\begin{equation}\label{conj1}
\mathcal{P}(d+2,d)= \displaystyle{2^{\frac{(d+2)(d+1)}{2}}}
\displaystyle{\bigg(\frac{[\frac{d}{2} ]+1}{[\frac{d}{2} ]}\bigg)^
{([\frac{d}{2} ]+1)([\frac{d}{2}])/2}}
\displaystyle{\bigg(\frac{[\frac{d+1}{2} ]+1}{[\frac{d+1}{2} ]}\bigg)^
{([\frac{d+1}{2} ]+1)([\frac{d+1}{2}])/2}}.
\end{equation}
\end{conjecture}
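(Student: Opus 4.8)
The plan is to prove the conjectured formula \eqref{conj1} is at least a lower bound for $\mathcal{P}(d+2,d)$ by explicit computation on the candidate configuration $\omega_{\{[d/2],[(d+1)/2]\}}$, and then to argue optimality by reducing to the characterization theorem. For the lower bound, observe that an orthogonal pair of regular simplexes $\{{\bf a}_0,\dots,{\bf a}_p\}\cup\{{\bf b}_0,\dots,{\bf b}_q\}$ (with $p=[d/2]$, $q=[(d+1)/2]$, so $p+q+2 = d+2$ when $d$ is even, and similarly for $d$ odd) lying in complementary coordinate subspaces has three kinds of edges: the within-simplex edges of the $p$-simplex, the within-simplex edges of the $q$-simplex, and the cross edges. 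A regular $m$-simplex inscribed in $\Sp^{m}$ (unit sphere in its own span) has all squared edge lengths equal to $2(m+1)/m$, so $P$ restricted to that simplex is $\bigl(2(m+1)/m\bigr)^{m(m+1)/2}$; since the two simplexes are orthogonal, every cross edge has squared length $|{\bf a}_i|^2+|{\bf b}_j|^2 = 1+1 = 2$, contributing $2^{(p+1)(q+1)}$. Multiplying the three factors and checking that $\binom{p+1}{2}+\binom{q+1}{2}+(p+1)(q+1) = \binom{p+q+2}{2} = \binom{d+2}{2}$ gives exactly the right-hand side of \eqref{conj1}; this step is routine bookkeeping.

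The substance is the reverse inequality, i.e. that no configuration does better. Here I would invoke Theorem \ref{characterization}: any log-optimal configuration is in particular stationary, hence falls into case (a), (b), or (c). Case (a) is excluded by Theorem \ref{monotonicity}, since a degenerate configuration of $N=d+2$ points lies in some $\Sp^{d'-1}$ with $d'<d$, and $\mathcal{P}(N,d')<\mathcal{P}(N,d)$ (noting $N = d+2 > d \ge d'$). For cases (b) and (c) the strategy is to use the constraints \eqref{sum_2N} from the Proposition — $\sum_{j\neq i} d_{i,j} = 2N$ for every $i$ — together with the mirror structure to reduce the optimization to a low-dimensional problem in a handful of real parameters (the distinct inner products), and then to solve that problem by elementary calculus. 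In case (c) every vertex lies in a nontrivial equivalence class, and by Proposition \ref{equivalence} each class is a regular simplex; for $N=d+2$ the only partitions into mirror classes that can span $\R^d$ are a class of size $p+1$ and a class of size $q+1$ with $p+q+2=d+2$, and the mirror/orthogonal-bisector observation in Remark \ref{mirror_remark} forces the two simplexes to lie in mutually orthogonal subspaces and be centered at the origin; maximizing $P$ over the relative radius of the two simplexes (one free parameter, pinned by \eqref{sum_2N}) yields exactly $\omega_{\{p,q\}}$ with $\{p,q\}=\{[d/2],[(d+1)/2]\}$. Case (b) must then be shown to produce only the ``pyramid-type'' competitors, whose energies are compared directly and found to be strictly larger (strictly worse product).

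I expect the main obstacle to be handling case (b) and the non-extreme sub-cases of (c) uniformly in $d$: when a vertex has all edges equal, the remaining $d+1$ points form a stationary configuration on a $(d-2)$-sphere subject to an extra linear constraint, and one must rule out, for \emph{every} $d$, that some such configuration beats the orthogonal double-simplex. This likely requires an inductive or dimension-analytic argument controlling the product $P$ under the normalization \eqref{sum_2N}, possibly via a convexity/AM–GM estimate showing that, among configurations with the prescribed row-sums of the squared-distance matrix, the product is maximized when the squared distances take as few distinct values as possible — which singles out the double-simplex. The auxiliary lemmas of Section 3 are presumably exactly the tools that make this last comparison rigorous in the two settled cases $d=4,5$; extending them to all $d$ is what keeps the statement a conjecture rather than a theorem, so a complete proof is not expected here, only the reduction just outlined.
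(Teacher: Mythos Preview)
You correctly recognize that the statement is a conjecture and that the paper does not prove it in general; the auxiliary lemmas only settle $d\le 5$. That said, your reduction of case (c) contains a concrete error. You assert that ``the only partitions into mirror classes that can span $\R^d$ are a class of size $p+1$ and a class of size $q+1$.'' This is false: with $N=d+2$ one can have three (and for larger $d$ more) mirror-equivalence classes and still be non-degenerate. Indeed, for $d=4$ the partition $\{2,2,2\}$ and for $d=5$ the partition $\{2,2,3\}$ give genuine stationary configurations that the paper must explicitly exclude; this is exactly the content of Lemma~\ref{three-simplex}, which you have omitted from your scheme. For general $d$ the number of mirror classes can be as large as $\lfloor (d+2)/2\rfloor$, and disposing of all such multi-simplex competitors uniformly in $d$ is one of the real obstructions to a proof, not merely case (b).

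Your two-class analysis is also muddled. Once there are exactly two classes and the configuration is non-degenerate, a dimension count (the within-class difference vectors must span all of $\R^d$, and each class center is orthogonal to all of them) forces both centers to be ${\bf 0}$; Remark~\ref{mirror_remark} alone does not give this. But then there is no ``relative radius'' parameter at all: each simplex is inscribed in a great subsphere of radius $1$, and the only freedom is the \emph{discrete} choice of split $(p+1,q+1)$. That discrete optimization is what Lemma~\ref{masscenter} carries out via the convexity of $F(x)=x(x+1)[\ln x-\ln(x+1)]$, selecting $p=[d/2]$. Finally, your case (b) remark is on the right track but should be stated as an induction: Lemma~\ref{square-based} shows the pyramid is beaten by $\omega_{\{[d/2],[(d+1)/2]\}}$ \emph{assuming} the conjecture for $d-1$, so a full proof would be inductive in $d$ with the multi-class case (c) analysis as the missing ingredient.
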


In the next section we include the proof of the characterization Theorem \ref{characterization}, as well as the monotonicity Theorem \ref{monotonicity}. In section 3 we formulate and prove three lemmas utilized in the proof of the log-optimality of the configurations \eqref{six} and \eqref{seven}. Since they are important in their own right and support our Conjecture \ref{conjecture}, we choose to formulate them in the greatest generality and to include them in a separate section. The proof of the log-optimality of the two new configurations is included in the last section.

\section{Proof of the characterization and monotonicity theorems}
\setcounter{equation}{0}
We first start with the proof of the Characterization Theorem.
\begin{proof}[Proof of Theorem \ref{characterization}]
Suppose that $\omega_N$ is not degenerate. We have to show that (b) or (c) must be true. If (c) doesn't hold, then there is a vertex, say ${\bf x}_{N}$, which is {\sl not} mirror related to any other vertex. We will show that (b) holds in this case.

Without loss of generality we can assume that ${\bf x}_{N} = (0,0,\dots,1)$. Let ${\bf x}_{i} =({\bf y}_{i},r_i)$, where ${\bf y}_{i}\in {\bf R}^{d-1}$ and $r_i \in {\bf R}$, $i=1,\dots,N$. Then the stereographical projection with pole ${\bf x}_{N}$ of ${\bf x}_{i}$ on the hyperplane $\{ x_d =0 \}$ is given by ${\bf a}_{i}={\bf y}_{i}/(1-r_i)$, $i=1,\dots,N-1$. After we rewrite \eqref{VecEqs} in terms of $\{ {\bf y}_{i}\}$ and $\{ r_i \}$, we get

\begin{eqnarray}
-\sum_{j\not= i,N} \frac{{\bf y}_{j}}{d_{i,j}} &=& \bigg( \frac{N-1}{2}-\sum_{j\not= i} \frac{1}{d_{i,j}}\bigg)\,
{\bf y}_{i}\ \ \ i=1,\dots,N\label{VecEq1}\\
\sum_{j\not= i,N} \frac{1-r_j}{d_{i,j}} &=& \frac{N-1}{2}\, r_i + (1-r_i) \sum_{j\not= i} \frac{1}{d_{i,j}}\ \ \ i=1,\dots,N \label{VecEq2}
\end{eqnarray}
Observe that $d_{N, i}=2(1-r_i),\ i=1,\dots ,N-1$. From \eqref{VecEq2} we get that
\begin{equation}\label{CoeffEq}
d_{N, i}\bigg( \frac{N-1}{2}-\sum_{j\not= i} \frac{1}{d_{i,j}}\bigg) = N-1-\sum_{j\not= i,N} \frac{d_{N, j}}{d_{i,j}}  \ \ \ i=1,\dots,N,
\end{equation}
which coupled with \eqref{VecEq1} gives
\begin{equation}\label{a_VecEq}
\bigg( N-1-\sum_{j\not= i} \frac{d_{N, j}}{d_{i,j}}\bigg) {\bf a}_{i}
+\sum_{j\not= i,N} \frac{d_{N, j}}{d_{i,j}}\, {\bf a}_{j} =0
\ \ \ i=1,\dots,N-1.
\end{equation}
When $i=N $ we simply obtain
\begin{equation}\label{a_Mass}
\sum_{j=1}^{N-1} {\bf a}_{j} =0,
\end{equation}
which means that ${\bf 0}$ remains a center of mass for $\{ {\bf a}_{j} \}$.
The center of mass condition ${\bf x}_{1}+\cdots+{\bf x}_{N}=0$ translates to
\begin{equation}\label{a_Mass2}
\sum_{j=1}^{N-1} d_{N,j} {\bf a}_{j} =0.
\end{equation}

The vector equations \eqref{a_VecEq}, \eqref{a_Mass}, and \eqref{a_Mass2} can be written in matrix form as $MA=0$, where
$A$ is the $(N-1)\times(d-1)$ matrix with row-vectors $\{{\bf a}_{i}\}$, and $M$ is the $(N+1)\times (N-1)$ coefficient matrix
\begin{equation}\label{matrix}
M=
\left(
\begin{array}{cccc}
\scriptstyle{1} & \scriptstyle{1} & \dots & \scriptstyle{1} \\
& & & \\
\scriptstyle{d_{N,1}} & \scriptstyle{d_{N,2}} &\cdots & \scriptstyle{d_{N,N-1}} \\
& & & \\
\scriptstyle{(N-1)}-\sum_{j\not=1,N} \frac{d_{N,j}}{d_{1,j}}
& \frac{d_{N,2}}{d_{1,2}} & \cdots & \frac{d_{N,N-1}}{d_{1,N-1}} \\
& & & \\
\vdots
& \vdots
& \dots & \vdots\\
\frac{d_{N,1}}{d_{N-1,1}} & \frac{d_{N,2}}{d_{N-1,2}} & \cdots &
\scriptstyle{(N-1)}-\sum_{j\not=N-1,N} \frac{d_{N,j}}{d_{N-1,j}} \\
\end{array}
\right)
\end{equation}

Since $\omega_N$ is non-degenerate, the vectors $\displaystyle{\{{\bf a}_{i}\}_{i=1}^{N-1}}$ span all of $\{ x_d = 0\}$, so ${\rm rank}(A)=d-1$. This implies that ${\rm ker}(M)\geq d-1$, or
\begin{equation} \label{rank_condition} {\rm rank}(M)\leq N-d.\end{equation}
We point out that this property of the matrix $M=M(\{d_{i,j}\})$ holds in general for {\em any critical non-degenerate configurations}.

In our case $N-d=2$, hence for any $i$ we have
\begin{equation}
{\rm rank}
\left(
\begin{array}{ccccc}
\scriptstyle{1} & \dots & \scriptstyle{1} & \dots & \scriptstyle{1} \\
& & & \\
\scriptstyle{d_{N,1}} & \cdots & \scriptstyle{d_{N,i}} &\cdots & \scriptstyle{d_{N,N-1}} \\
& & & \\
\frac{d_{N,1}}{d_{i,1}} & \cdots & \scriptstyle{(N-1)}-\sum_{j\not=i,N} \frac{d_{N,j}}{d_{i,j}}
& \cdots & \frac{d_{N,N-1}}{d_{i,N-1}} \\
\end{array}
\right)\leq 2,\ i=1,...,N-1.
\end{equation}
If the rank above is $1$, then $d_{N,1}=\dots=d_{N,N-1}$ and (b) holds. So, we may assume that the rank is $2$. We now fix $i$ and substitute the $i$-th column with the sum of all columns, then multiply the $j$-th column of the resulting matrix with $d_{i,j}$ for all $j\not=i,N$. The new matrix will have the same rank.
\[
{\rm rank}
\left(
\begin{array}{ccccc}
\scriptstyle{d_{i,1}} & \cdots & \scriptstyle{N-1} &\cdots & \scriptstyle{d_{i,N-1}} \\
& & & & \\
\scriptstyle{d_{N,1} d_{i,1}} & \cdots & \scriptstyle{2N}& \cdots & \scriptstyle{d_{N,N-1} d_{i,N-1}} \\
& & & & \\
\scriptstyle{d_{N,1}} & \cdots & \scriptstyle{N-1} &\cdots & \scriptstyle{d_{N,N-1}} \\
\end{array}
\right)=2.
\]
If we fix some $j\not= i$, substitute the $k$-th column ($k\not= i,j$) with the sum of all the columns but the $i$-th, the rank of the resulting matrix will still be $2$. This implies that the $3\times 3$ determinant made of $i,j,k$-th columns will be zero. Using \eqref{sum_2N} we get that
\begin{equation}
\det
\left(
\begin{array}{ccc}
\scriptstyle{d_{i,j}} &  \scriptstyle{N-1} & \scriptstyle{2N-d_{N,i}} \\
& &  \\
\scriptstyle{d_{N,j} d_{i,j}} & \scriptstyle{2N}& \sum_{l \not= i,N} \scriptstyle{d_{N,l} d_{i,l}} \\
& &  \\
\scriptstyle{d_{N,j}} & \scriptstyle{N-1} & \scriptstyle{2N-d_{N,i}} \\
\end{array}
\right)= 0,
\end{equation}
which reduces to
\begin{equation}\label{DetEq}
(d_{i,j}-d_{N,j})\bigg[ 2N(2N-d_{N,i})-(N-1)\sum_{l \not= i,N} d_{N, l} d_{i, l} \bigg]    =0,\ \ \ \ 1\leq i\not= j \leq N-1.
\end{equation}
If there is an $i$ for which the expression in the brackets in \eqref{DetEq} is nonzero, then $d_{N,j}=d_{i,j}$ for all $j\not= i,N$, which implies that ${\bf x}_{N} \sim {\bf x}_{i}$, which contradicts our assumption in the beginning of the proof. Therefore,
\begin{equation}\label{ScalarEq}
2N(2N-d_{N,i})-(N-1)\sum_{l \not= i,N} d_{N,l} d_{i,l}=0, \ \ \ i=1,\dots,N-1.
\end{equation}
Adding \eqref{ScalarEq} for $i=1,\dots,N-1$ we get using \eqref{sum_2N}
\begin{eqnarray}
0
&=&
2N\sum_{i=1}^{N-1} (2N-d_{N,i}) -(N-1)\sum_{i=1}^{N-1}\sum_{l \not= i,N} d_{N,l} d_{i,l} \nonumber \\
&=& (2N)^2 (N-1)-(2N)^2-(N-1)\sum_{l=1}^{N-1} d_{N,l} \sum_{i \not= l,N} d_{i,l} \nonumber \\
&=&
(2N)^2 (N-1)-(2N)^2-(N-1)\sum_{l=1}^{N-1} d_{N,l} (2N-d_{N,l} ) \nonumber \\
&=&
-(2N)^2+(N-1)\sum_{l=1}^{N-1} d_{N,l}^2.
\end{eqnarray}
In view of \eqref{sum_2N}, we find that equality holds in the Arithmetic-Quadratic Mean Inequality
\[
\bigg( \sum_{l=1}^{N-1} d_{N,l} \bigg)^2=(N-1)\sum_{l=1}^{N-1} d_{N,l}^2,
\]
which is possible only when $d_{N,1}=\dots=d_{N,N-1}$, which implies that (b) holds. This proves the Characterization Theorem.
\end{proof}

Next we continue with the proof of the Monotonicity Theorem, which implies that the only degenerate log-optimal (and $s$-energy optimal) configurations may be regular simplexes embedded in a sphere of higher dimension.
\begin{proof}[Proof of Theorem \ref{monotonicity}]
If $N\leq d+1$ the only optimal configuration is the regular $(N-1)$-simplex. This could be easily seen from (2.2) and the Geometric-Arithmetic Mean.
Indeed, for stationary configurations we have
\begin{equation}\label{mon_ineq1}
P(\omega_N)^2 = \prod_{i=1}^N \prod_{j\not= i}d_{i,j}\leq \prod_{i=1}^N \bigg( \Big(\sum_{j\not= i} d_{i,j}\Big)/(N-1)\bigg)^{N-1}=\bigg( \frac{2N}{N-1} \bigg)^{N(N-1)},
\end{equation}
and the upper bound is attained only if all the $d_{i,j}$'s are equal. But $N$ points lie in an $N-1$ dimensional hyperplane, which also must contain the origin (since it is a center of mass for stationary configurations), thus the optimal configuration lies in a $(N-1)$-dimensional subspace where the only $N$-point configuration with all mutual distances equal is the regular simplex. Since $N-1\leq d$, we can "fit" it in ${\Sp}^{d-1}$. This proves that $\mathcal{P}(N,d)=\mathcal{P}(N,N-1)$ for all $d\geq N$.

Now let $d<N$. It is clear that $\mathcal{P}(N,d-1)\leq \mathcal{P}(N,d)$ (the maximum over a larger set is larger). Then all we have to show is that a log-optimal configuration is non-degenerate. Indeed, if $\mathcal{P}(N,d)=\mathcal{P}(N,d-1)$ for some $d<N$, then there is an optimal configuration in ${\Sp}^{d-2}$ that is also an optimal configuration in ${\Sp}^{d-1}$, and thus is degenerate.

Suppose that $\omega_N= \{ {\bf x}_{1}, {\bf x}_{2} , \dots, {\bf x}_{N} \}$ is a log-optimal configuration in ${\Sp}^{d-1}$, which is degenerate. Then the hyperplane $G_{\omega_N}$ spanned by $\omega_N$ is of dimension $<d$. Because the center of mass ${\bf 0}$ is contained in $G_{\omega_N}$, we may assume that $\omega_N \subset \{ x_d =0 \}$. But ${\Sp}^{d-2}$ can no longer support the regular $(N-1)$-simplex (recall that $d<N$), so there is a pair of adjacent edges with unequal length, say $d_{1,3}<d_{2,3}$. Without loss of generality we can assume that

\[
{\bf x}_{1}=(r,\sqrt{1-r^2},0,\dots,0), \ \ \ {\bf x}_{2}= (r,-\sqrt{1-r^2},0,\dots,0).
\]
Consider the configuration $\omega_N' = \{ {\bf x}_{1 }', {\bf x}_{2}' , {\bf x}_{3 }, \dots, {\bf x}_{N} \}$, where
\[
{\bf x}_{1}'=(r,0,\dots,\sqrt{1-r^2}),\ \ \  {\bf x}_{2}'= (r,0,\dots,-\sqrt{1-r^2}).
\]
If ${\bf x}_{j}=(c_1,c_2,\dots,c_{d-1},0)$ is any point in $\omega_N$ (with $j\geq 3$), we have that

\begin{eqnarray} \label{mon_ineq2}
|{\bf x}_{j}-{\bf x}_{1}|^2 |{\bf x}_{j}-{\bf x}_{2}|^2 &=& (2-2c_1 r)^2-4c_2^2(1-r^2)\\
&\leq& (2-2c_1 r)^2=
|{\bf x}_{j}-{\bf x}_{1}'|^2 |{\bf x}_{j}-{\bf x}_{2}'|^2,\nonumber
\end{eqnarray}
with equality only if $c_2 =0$, which implies $d_{j,1}=d_{j,2}$. But for $j=3$ this is impossible and strict inequality holds in \eqref{mon_ineq2}. Since $ |{\bf x}_{1}-{\bf x}_{2}|=|{\bf x}_{1}'-{\bf x}_{2}'|$, we get $P(\omega_N) < P(\omega_N')$, which is a contradiction. This proves the theorem.
\end{proof}

\begin{remark}
The same argument can be applied to the Riesz $s$-energy case, namely \eqref{mon_ineq1} can be modified and equality will still hold for all distances equal, as well as for \eqref{mon_ineq2} we use the fact that the function
\[f(t)=(b-t)^{-s}+(b+t)^{-s}, \ b>t\geq 0,\]
achieves minimum when $t=0$ (maximum when $s<0$). Therefore, the conclusion of Theorem \ref{monotonicity} is true for  $\mathcal{E}_s (N,d)$ and the $s$-energy optimal points (see \eqref{s-energy}).
\end{remark}

\section{Three auxiliary results}
\setcounter{equation}{0}

Our first auxiliary Lemma deals with the case when condition (b) of Theorem \ref{characterization} holds.

\begin{lemma} \label{square-based}
Suppose $N=d+2$ and $\omega_N$ is a stationary logarithmic configuration that has a vertex with all outgoing edges equal. Suppose further that the log-optimal configuration of $d+1$ points on $\Sp^{d-2}$ satisfies Conjecture \ref{conjecture}. Then $\omega_N$ is not log-optimal and moreover,
\[ P(\omega_N)< P(\omega_{\{[ d/2 ],[ (d+1)/2 ]\}}).\]
\end{lemma}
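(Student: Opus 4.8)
The plan is to exploit the structure forced by condition (b): if a vertex, say $\mathbf{x}_N$, has all outgoing edges equal, then $d_{N,i}=d_{N,j}$ for all $i,j\le N-1$, and since $\sum_{j\neq i}d_{i,j}=2N$ from \eqref{sum_2N} applied with $i=N$, the common value is $d_{N,i}=2N/(N-1)=2(d+2)/(d+1)$. Geometrically this means $\mathbf{x}_N$ sits at a ``pole'' and the remaining $d+1$ points $\{\mathbf{x}_1,\dots,\mathbf{x}_{d+1}\}$ lie on the parallel hypersphere $\{x_d = r\}$ orthogonal to $\mathbf{x}_N$, where $r$ is determined by $d_{N,i}=2(1-r)$, giving $r=-1/(d+1)$. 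That hypersphere is a scaled copy of $\Sp^{d-2}$: writing $\mathbf{x}_i=(\sqrt{1-r^2}\,\mathbf{z}_i, r)$ with $\mathbf{z}_i\in\Sp^{d-2}$, each squared edge among the base points scales as $d_{i,j}=(1-r^2)\,\widetilde d_{i,j}$, where $\widetilde d_{i,j}$ is the squared edge of $\{\mathbf{z}_i\}$ on $\Sp^{d-2}$.

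Next I would decompose the product. We have
\[
P(\omega_N) = \Big(\prod_{i=1}^{d+1} d_{N,i}\Big)\cdot\!\!\prod_{1\le i<j\le d+1}\!\! d_{i,j}
= \Big(\frac{2(d+2)}{d+1}\Big)^{d+1} (1-r^2)^{\binom{d+1}{2}} P(\{\mathbf{z}_i\}),
\]
and $P(\{\mathbf{z}_i\})\le \mathcal{P}(d+1,d-1)$, the maximal product of $d+1$ points on $\Sp^{d-2}$. By hypothesis the log-optimal configuration of $d+1$ points on $\Sp^{d-2}$ satisfies Conjecture \ref{conjecture}, so $\mathcal{P}(d+1,d-1)$ is given by the formula \eqref{conj1} with $d$ replaced by $d-1$. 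With $1-r^2 = 1-1/(d+1)^2 = d(d+2)/(d+1)^2$, this gives a completely explicit closed-form upper bound $B(d)$ for $P(\omega_N)$. The remaining task is the inequality $B(d) < \mathcal{P}(d+2,d)$, where for the right-hand side I would use the value $P(\omega_{\{[d/2],[(d+1)/2]\}})$ predicted by \eqref{conj1} — which is a genuine lower bound for $\mathcal{P}(d+2,d)$ since that configuration is admissible, so establishing $B(d) < P(\omega_{\{[d/2],[(d+1)/2]\}})$ suffices and is exactly the claimed conclusion.

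So the proof reduces to a single scalar inequality between two explicit products of the form $2^{\text{(stuff)}}\prod (\tfrac{m+1}{m})^{m(m+1)/2}$. I would take logarithms, split into the cases $d$ even and $d$ odd (since $[d/2]$ and $[(d+1)/2]$ behave differently in each parity), cancel the common factor $2^{(d+2)(d+1)/2}$, and reduce to comparing sums of terms $\tbinom{m+1}{2}\log\tfrac{m+1}{m}$. The function $g(m):=\tbinom{m+1}{2}\log\tfrac{m+1}{m}=\tfrac{m(m+1)}{2}\log(1+\tfrac1m)$ is increasing and asymptotically $\sim m/2$, and the bound $B(d)$ essentially replaces the ``balanced split'' $\{[d/2],[(d+1)/2]\}$ of the $d+1$ base points with the ``pole plus balanced split of $\Sp^{d-2}$'' structure, i.e. with $\{1,[\,(d-1)/2\,],[\,d/2\,]\}$ together with the pole-factor $(\tfrac{2(d+2)}{d+1})^{d+1}(1-r^2)^{\binom{d+1}{2}}$; convexity/monotonicity of $g$ plus a direct estimate of the pole-factor against $(\tfrac{2(d+2)}{d+1})^{\,d+1}$ should make the strict inequality fall out, with perhaps the smallest dimensions ($d=3,4,5$) checked by hand.

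\textbf{Main obstacle.} The conceptual steps — the pole decomposition, the scaling of edges, and invoking the conjectured value on $\Sp^{d-2}$ — are routine. The real work is the final elementary-but-delicate inequality $B(d) < P(\omega_{\{[d/2],[(d+1)/2]\}})$: both sides are products over quadratically many edges, the parity cases must be handled separately, and one must show that the loss incurred by forcing a vertex to be ``polar'' (the configuration of type (b)) strictly outweighs any gain — this is where a clean monotonicity argument for $g(m)$, combined with a careful accounting of exactly which $(\tfrac{m+1}{m})$-factors appear on each side, is needed to avoid a brute-force induction.
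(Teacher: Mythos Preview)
Your setup is exactly the paper's: the pole decomposition, the hyperplane $x_d=-1/(d+1)$, the scaling factor $1-r^2=d(d+2)/(d+1)^2$, the bound $P(\omega_N)\le \bigl(\tfrac{2(d+2)}{d+1}\bigr)^{d+1}(1-r^2)^{\binom{d+1}{2}}\mathcal{P}(d+1,d-1)$, and the appeal to the conjectured value of $\mathcal{P}(d+1,d-1)$ all match verbatim.

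Where you are vague, the paper is sharp, and it is worth knowing how cleanly the final inequality collapses. There is no separate ``pole-factor estimate'' and no hand-check of small dimensions. After cancelling $2^{(d+2)(d+1)/2}$ and writing $d=2k$, the inequality $B(d)<P(\omega_{\{k,k+1\}})$ becomes
\[
\Bigl(\tfrac{2k}{2k+1}\Bigr)^{\frac{2k(2k+1)}{2}}\Bigl(\tfrac{2k+2}{2k+1}\Bigr)^{\frac{(2k+1)(2k+2)}{2}}
<
\Bigl(\tfrac{k-1}{k}\Bigr)^{\frac{(k-1)k}{2}}\Bigl(\tfrac{k+1}{k}\Bigr)^{\frac{k(k+1)}{2}}.
\]
Set $F(x)=x(x+1)\bigl[\ln x-\ln(x+1)\bigr]$ (this is $-2g(x)$ in your notation) and $G(x)=F(x)-F(x+1)$; the displayed inequality is literally $G(2k)<G(k-1)$, and the odd case $d=2k+1$ is $G(2k+1)<G(k)$. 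So everything reduces to showing $G$ is strictly decreasing on $[1,\infty)$. One computes
\[
F'''(x)=-\Bigl(\tfrac{1}{x}-\tfrac{1}{x+1}\Bigr)^2<0,
\]
so $F''$ is decreasing; since $F''(x)\to 0$ as $x\to\infty$, $F''>0$, hence $F'$ is strictly increasing, hence $G'(x)=F'(x)-F'(x+1)<0$. That is the whole argument---your intuition about ``convexity of $g$'' is exactly right (concavity of $g$ is convexity of $F$), but the reduction to the single monotonicity statement $G(2k)<G(k-1)$ is what makes it a two-line finish rather than a delicate accounting.
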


\begin{proof}
Let $\omega_N$ be an optimal configuration for which Theorem \ref{characterization}(b) holds. Without loss of generality we may assume that ${\bf x}_{N}=(0,\dots0,1)$ and $d_{N,1}=d_{N,2}=\dots=d_{N,N-1}=2N/(N-1)$. Since ${\bf 0}$ is the center of mass of $\omega_N$, we have that
$\Omega_{N-1} :=\{ {\bf x}_{1},\dots,{\bf x}_{{N-1}} \} \subset \{ x_d = -1/(N-1) \}$.
Let $\tau_N := \{ {\bf y}_{1},\dots,{\bf y}_{{N-1}}, {\bf x}_{N} \}$ be an arbitrary configuration on
${\Sp}^{d-1}$ with $T_{N-1}:=\{ {\bf y}_{1},\dots,{\bf y}_{{N-1}} \}  \subset \{ x_d = -1/(N-1) \}$.
Then $P(\tau_N ) \leq P(\omega_N )$, and hence $P(T_{N-1} )\leq P(\Omega_{N-1})$. Thus, $\Omega_{N-1}$ is an optimal configuration in ${\Sp}^{d-1} \cap \{ x_d = -1/(N-1) \}$, which is a sphere in ${\R}^{d-1}$ of radius $r$, where $r^2=N(N-2)/(N-1)^2$.  Therefore,
\begin{equation}
P(\omega_N)
= \bigg( \frac{2N}{N-1} \bigg)^{N-1}  P(\Omega_{N-1} ) = 2^{N-1} \bigg( \frac{N}{N-1} \bigg)^{N-1} r^{(N-1)(N-2)}  \mathcal{P} (d+1,d-1) )
\end{equation}
We will compare $P(\omega_N)$ and $P(\omega_{\{[ d/2 ],[ (d+1)/2 ]\}})$.

Let $d=2k$ (the case $d=2k+1$ is being similar). By the assumption of the lemma the configuration $\Omega_{N-1}$ consists of two orthogonal regular $[(d-1)/2]$- and $[d/2]$-simplexes and hence formula \eqref{conj1} holds.

\begin{eqnarray}\label{squarebaseconf}
P(\omega_N) &= &2^{(k+1)(2k+1)}
\bigg( \frac{2k+2}{2k+1} \bigg)^{2k+1}
\bigg( \frac{2k(2k+2)}{(2k+1)^2} \bigg)^{2k(2k+1)/2}\\
& &\ \ \ \ \ \ \ \ \ \ \ \ \ \ \ \ \ \ \ \ \ \ \ \ \ \ \ \ \times \,\bigg(\frac{k}{k-1}\bigg)^{k(k-1)/2}
\bigg(\frac{k+1}{k}\bigg)^{k(k+1)/2}\nonumber
\end{eqnarray}

The quantity $P(\omega_{\{[ d/2 ],[ (d+1)/2 ]\}})$ is the right-hand side of \eqref{conj1} for $d=2k$, which simplifies to
\begin{equation}\label{bipyramidconf}
P(\omega_{\{[ d/2 ],[ (d+1)/2 ]\}}) = 2^{(k+1)(2k+1)}
\bigg(\frac{k+1}{k}\bigg)^{k(k+1)}
\end{equation}
We claim that $P(\omega_N)<P(\omega_{\{[ d/2 ],[ (d+1)/2 ]\}})$. Comparing \eqref{squarebaseconf} and \eqref{bipyramidconf} we have to verify the inequality
\begin{equation}\label{conditionBineq}
\bigg( \frac{2k}{2k+1} \bigg)^{(2k)(2k+1)/2}
\bigg( \frac{2k+2}{2k+1} \bigg)^{(2k+1)(2k+2)/2}<
\bigg(\frac{k-1}{k}\bigg)^{(k-1)k/2}
\bigg(\frac{k+1}{k}\bigg)^{k(k+1)/2},
\end{equation}
for all $k\geq 2$. Let
\begin{equation} \label{F}
F(x):= x(x+1)[\ln x -\ln (x+1)]
\end{equation} and $G(x):= F(x)-F(x+1)$. Then \eqref{conditionBineq} will hold if and only if $G(2k)<G(k-1)$. We differentiate $F(x)$ to find
\begin{eqnarray}
F'(x) &=& (2x+1) [\ln x -\ln(x+1)] +1 \label{firstderivative}\\
F'' (x) &=& 2[ \ln x -\ln(x+1) ]+ \frac{1}{x} +\frac{1}{x+1} \label{secondderivative} \\
F''' (x) &=& \frac{2}{x(x+1)} - \frac{1}{x^2} -\frac{1}{(x+1)^2}= -\bigg( \frac{1}{x} -\frac{1}{x+1}\bigg)^2 < 0. \label{thirdderivative}
\end{eqnarray}
From \eqref{thirdderivative} we get that $F'' (x)$ is strictly decreasing on $[1,\infty)$. Since $\lim_{x\to \infty} F'' (x) =0$, we derive that $F'' (x)>0$, and thus $F' (x)$ is strictly increasing on $[1,\infty)$. Since $G' (x) = F' (x)-F'(x+1)$ we finally conclude that $G' (x)<0$, and therefore $G(x)$ is strictly decreasing. This verifies $G(2k)<G(k-1)$ and \eqref{conditionBineq} for $k\geq 2$.

The case $d=2k+1$ is similar and reduces to $G(2k+1)<G(k)$, which of course also holds.
\end{proof}

In Proposition \ref{equivalence} we showed that the mirror relation ${\bf x}_i \sim{\bf x}_j$ is an equivalence relation. Moreover, the classes of equivalence form regular simplexes. Hence, if a configuration satisfies condition (c) of Theorem \ref{characterization}, then a natural decomposition of the configuration in regular simplexes  (components) occurs. We next show that if the hyperplane spanned by the points in such a component contains the origin (see \eqref{hyperplane}), then a configuration may be optimal only if it is $\omega_{\{[ d/2 ],[ (d+1)/2 ]\}}$.

\begin{lemma} \label{masscenter}
Suppose $N=d+2$ and let $\omega_N$ be a log-optimal configuration that satisfies condition (c) of Theorem \eqref{characterization}. If ${\bf 0}\in G_{U}$, where $U$ is a regular simplex component of $\omega_N$, then $\omega_N=\omega_{\{[ d/2 ],[ (d+1)/2 ]\}}$.
\end{lemma}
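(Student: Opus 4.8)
\textit{Proof proposal.}
The plan is to use condition (c) to break $\omega_N$ into mirror--equivalence components, combine the geometry of the mirror relation with the center--of--mass condition to show that the component $U$ containing ${\bf 0}$ in its hyperplane is in fact centered at ${\bf 0}$, deduce from this that $\omega_N$ must be a pair of mutually orthogonal \emph{great} regular simplexes, and then pick the right sizes of those two simplexes by a convexity argument reusing the function $F$ and its derivatives from Lemma~\ref{square-based}. To set up: by Proposition~\ref{equivalence} the mirror relation partitions $\omega_N$ into components $U=U_1,U_2,\dots,U_m$, each a regular simplex; by condition (c) each has $n_\ell:=|U_\ell|\ge 2$, and since $d+2$ points of $\R^d$ cannot be pairwise equidistant, $m\ge 2$. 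For a component $U_\ell$ let ${\bf c}_\ell$ be its centroid and $V_\ell:={\rm span}\{{\bf x}-{\bf x}':{\bf x},{\bf x}'\in U_\ell\}$, so $\dim V_\ell=n_\ell-1$ and $G_{U_\ell}={\bf c}_\ell+V_\ell$. As in Remark~\ref{mirror_remark}, if ${\bf x}_p,{\bf x}_q\in U_\ell$ then $d_{p,k}=d_{q,k}$ for every $k\notin U_\ell$; letting the pair range over $U_\ell$ shows that every vertex outside $U_\ell$ is equidistant from all vertices of $U_\ell$, hence lies on the axis ${\bf c}_\ell+V_\ell^{\perp}$.

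Next I would show ${\bf c}_1={\bf 0}$ (take $U=U_1$). The centroid ${\bf c}'$ of $\omega_N\setminus U_1$ is an average of points of the affine set ${\bf c}_1+V_1^{\perp}$, hence lies in it; a log-optimal configuration is stationary, so its centroid is ${\bf 0}$, giving $n_1{\bf c}_1+(N-n_1){\bf c}'={\bf 0}$, and a short computation forces ${\bf c}_1\in V_1^{\perp}$. On the other hand the hypothesis ${\bf 0}\in G_{U_1}={\bf c}_1+V_1$ says ${\bf c}_1\in V_1$. Therefore ${\bf c}_1\in V_1\cap V_1^{\perp}=\{{\bf 0}\}$, so $U_1$ is a regular $(n_1-1)$-simplex inscribed (with circumcenter ${\bf 0}$) in the great subsphere $\Sp^{d-1}\cap V_1$, all its squared edge lengths equal $2n_1/(n_1-1)$, and $\omega_N\setminus U_1\subset\Sp^{d-1}\cap V_1^{\perp}$.

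Then I would factor the product. Since the $n_1$ vertices of $U_1$ lie in $V_1$ and the remaining $N-n_1$ vertices lie in $V_1^{\perp}$, each of unit norm, all cross squared-distances equal $2$; hence for \emph{any} configuration $\tau$ of $N-n_1$ points on $\Sp^{d-1}\cap V_1^{\perp}$ one has $P(U_1\cup\tau)=\bigl(2n_1/(n_1-1)\bigr)^{\binom{n_1}{2}}\,2^{\,n_1(N-n_1)}\,P(\tau)$, where the first two factors depend only on $n_1$. Log-optimality of $\omega_N$ then forces $\omega_N\setminus U_1$ to be log-optimal among $(N-n_1)$-point configurations on $\Sp^{d-1}\cap V_1^{\perp}$, which is a sphere $\Sp^{d'-1}$ with $d'=d-n_1+1$; since $N-n_1=d'+1$, the proof of Theorem~\ref{monotonicity} (the $N=d+1$ case) identifies it as a regular $d'$-simplex. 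Being pairwise equidistant, and at distance $\sqrt 2$ from all of $U_1$, it forms a single mirror class in $\omega_N$, so $m=2$; its centroid is ${\bf 0}$ (again by the center--of--mass relation, or because the circumcenter of a simplex inscribed in a great subsphere is the center), so its direction space is exactly $V_1^{\perp}$. Thus $\omega_N=\omega_{\{n_1-1,\,n_2-1\}}$, a pair of mutually orthogonal great regular simplexes with $n_1+n_2=d+2$ and $2\le n_1\le d$.

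Finally I would choose $n_1$. Using $\binom{n_1}{2}+\binom{n_2}{2}+n_1n_2=\binom{d+2}{2}$ and $\binom{n}{2}\ln\frac{n}{n-1}=-\frac12 F(n-1)$ with $F$ as in \eqref{F}, one gets
\[
\ln P(\omega_N)=\binom{d+2}{2}\ln 2-\frac12\bigl(F(n_1-1)+F(n_2-1)\bigr),\qquad (n_1-1)+(n_2-1)=d .
\]
By \eqref{thirdderivative} and the argument in the proof of Lemma~\ref{square-based}, $F$ is strictly convex on $[1,\infty)$, so $t\mapsto F(t)+F(d-t)$ is strictly unimodal with minimum at $t=d/2$; over integers $t=n_1-1\in[1,d-1]$ this minimum is attained precisely at the integer(s) nearest $d/2$, i.e.\ at $\{n_1-1,n_2-1\}=\{[d/2],[(d+1)/2]\}$ as an unordered pair. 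Since $\omega_N$ is log-optimal, $\ln P(\omega_N)$ is maximal, which forces this partition and yields $\omega_N=\omega_{\{[d/2],[(d+1)/2]\}}$. The main obstacle is the structural step: turning the single hypothesis ${\bf 0}\in G_U$ into the rigid two--orthogonal--simplex picture (centering $U$ at ${\bf 0}$, then showing the complement is one regular simplex and $m=2$); once that is done the choice of the balanced partition is a one-line convexity argument. A secondary point to handle carefully is the bookkeeping of dimensions and of the uniqueness (up to swapping the two simplexes) of the balanced integer split.
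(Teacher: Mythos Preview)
Your proposal is correct and follows essentially the same route as the paper: show the component $U$ is centered at ${\bf 0}$, deduce that the complement lies in $V_1^{\perp}$ with all cross-distances equal to $2$, conclude by log-optimality that the complement is a regular simplex (the $N=d'+1$ case), and then optimize the split using the strict convexity of $F$ (equivalently, that $F'$ is increasing). Your write-up is in fact more careful than the paper's in justifying ${\bf c}_1={\bf 0}$ and in noting $m=2$, but the argument is the same.
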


\begin{proof}
Suppose that $U:=\{ {\bf x}_{1}, \dots , {\bf x}_{{k+1}} \}$ is such a component and let ${\bf u}_{j}:={\bf x}_{1}-{\bf x}_{j},\ j=2,\dots,k+1$. First, we note that the vertices of a regular simplex are geometrically independent, i.e. the $\{ {\bf u}_{j} \}$'s are linearly independent.
Let $ V := \{ {\bf x}_{{k+2}}, \dots, {\bf x}_{{d+2}} \}$ and let ${\bf p}:=({\bf x}_{1}+\cdots + {\bf x}_{{k+1}})/(k+1)$ and ${\bf q}:=({\bf x}_{{k+2}}+\cdots + {\bf x}_{{d+2}})/(d-k+1)$ be the centers of mass of $U$ and $V$ respectively. Since ${\bf x}_{1} \sim {\bf x}_{j}$ for any $2 \leq j \leq k+1$, by Remark 2.3 we have that the radius-vectors $\{ {\bf x}_{s} \}$ are orthogonal to ${\bf u}_{j}$ for all $2\leq j \leq k+1$, $k+2 \leq s \leq d+2$.

By the assumptions of the lemma ${\bf p}={\bf 0}$, so $G_U$ is a $k$-dimensional subspace and all ${\bf x}_{j}\in V$ belong to its orthogonal complement. Since $d_{i,j} =2$ whenever $i\leq k+1 <j$,
and $\omega_N$ is log-optimal, $V$  must be a regular simplex also (as a log-optimal sub-configuration itself). We calculate the product
\begin{eqnarray}
P(\omega_N)
&=&
2^{(k+1)(d-k+1)} \bigg( \frac{2(k+1)}{k}\bigg)^{(k+1)k/2}
\bigg( \frac{2(d-k+1)}{d-k}\bigg)^{(d-k+1)(d-k)/2} \nonumber \\
&=& 2^{(d+1)(d+2)/2} \bigg( \frac{k+1}{k}\bigg)^{(k+1)k/2}
\bigg( \frac{d-k+1}{d-k}\bigg)^{(d-k+1)(d-k)/2}
\end{eqnarray}
We have that $\log(P(\omega_N)^2)=-F(k)-F(d-k)+const$, where $F(x)$ is defined in \eqref{F}. Let $H(x)=-F(x)-F(d-x)$. Then $H'(x)=-F'(x)+F'(d-x)$. But we found earlier (see \eqref{firstderivative} and the discussion after) that $F'(x)$ is strictly increasing on $[1,\infty)$, therefore $H'(x)>0$ on $[1,d/2)$ and $H'(x)<0$ on $(d/2,d-1]$ (clearly $H'(d/2)=0$). Thus, the maximum of $P(\omega_N )$ is achieved when $ k= [ d/2 ]$, which is the configuration $\omega_{\{[ d/2 ],[ (d+1)/2 ]\}}$.
\end{proof}

Our next Lemma shows that a stationary logarithmic configuration that is decomposed into three simplexes is {\bf not} log-optimal.

\begin{lemma} \label{three-simplex}
Suppose $N=d+2$ and $\Omega_N=U\cup V \cup W$ is a stationary logarithmic configuration that consists of three regular simplexes $U=\{ {\bf x}_1\dots, {\bf x}_k\}$, $V=\{ {\bf y}_1\dots, {\bf y}_l\}$, and $W=\{ {\bf z}_1\dots, {\bf z}_m\}$, $k+l+m=N$. Then $\Omega_N=\Omega_{\{k,l,m\}}$ is not log-optimal.
\end{lemma}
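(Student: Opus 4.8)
I read the hypothesis as saying that $U,V,W$ are the mirror-equivalence classes of $\Omega_N$ (each a regular simplex by Proposition~\ref{equivalence}); put $k=|U|,\ l=|V|,\ m=|W|$, so $k+l+m=d+2$, and $k,l,m\ge 2$ because condition (c) of Theorem~\ref{characterization} puts at least two vertices in every class. If $\Omega_N$ is degenerate it is not log-optimal by Theorem~\ref{monotonicity}, so assume it spans $\R^d$. By Remark~\ref{mirror_remark} every vertex outside a class is equidistant from all vertices of that class and hence lies on the axis of that class; this forces the edge-spans $E_U,E_V,E_W$ (of dimensions $k-1,l-1,m-1$) to be mutually orthogonal and the three centroids to lie on the common orthogonal line $L=\R{\bf e}$ (well defined since $\dim(E_U\oplus E_V\oplus E_W)=d-1$). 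Write the centroids at heights $a,b,c$ along $L$; by \eqref{VecEqs} the center of mass is ${\bf 0}$, so $ka+lb+mc=0$, and $(a,b,c)\neq(0,0,0)$ (else $\Omega_N\subset E_U\oplus E_V\oplus E_W$ would be degenerate). Since a vertex of a size-$n$, height-$t$ class has $E$-component of squared length $1-t^2$, inside that class $d_{i,j}=\tfrac{2n}{n-1}(1-t^2)$, while between two classes at heights $t_1,t_2$ one has $d_{i,j}=2-2t_1t_2$. Multiplying these and using $\tfrac{n(n-1)}2[\ln(n-1)-\ln n]=-\tfrac12F(n-1)$ with $F$ as in \eqref{F},
\[
P(\Omega_N)=2^{(d+1)(d+2)/2}\,e^{-\frac12[F(k-1)+F(l-1)+F(m-1)]}\,\Phi(a,b,c),
\]
\[
\Phi(a,b,c)=(1-a^2)^{k(k-1)/2}(1-b^2)^{l(l-1)/2}(1-c^2)^{m(m-1)/2}(1-ab)^{kl}(1-ac)^{km}(1-bc)^{lm}.
\]
Rewriting \eqref{VecEqs} in components along $L$ and along each $E$ (exactly as in the derivation of \eqref{a_VecEq} from \eqref{VecEqs}) shows that $\Omega_N$ is stationary precisely when $ka+lb+mc=0$ and $\nabla\log\Phi(a,b,c)=0$; so, setting $\Pi:=\{(a,b,c):ka+lb+mc=0\}$, the configuration corresponds to a free critical point of $\log\Phi$ that lies on $\Pi$ (the Lagrange multiplier of the constraint vanishing).

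Next, if some height — say $c$ — equals $0$, then $\partial_c\log\Phi=0$ reduces to $ka+lb=0$, while $\partial_a\log\Phi=\partial_b\log\Phi=0$ become the critical-point equations of $\Phi(\cdot,\cdot,0)$ and force $a^2=\tfrac{l}{k(k+l-1)}$, $b^2=\tfrac{k}{l(k+l-1)}$. A short computation then gives $d_{i,j}=\tfrac{2(k+l)}{k+l-1}$ for every pair inside $U\cup V$ and $d_{i,j}=2$ for every pair between $U\cup V$ and $W$, so $U\cup V$ is a single mirror class and $\Omega_N$ is a two-simplex configuration, contradicting the hypothesis. Hence all of $a,b,c$ are nonzero, and in particular ${\bf 0}\notin G_U,G_V,G_W$.

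It remains to show that this interior critical point of $\log\Phi|_\Pi$ (all coordinates nonzero) is not a local maximum. On $\Pi$ one has $\log\Phi\to-\infty$ at the relative boundary (where some $|t|\to 1$), while at the origin the Hessian of $\log\Phi$ restricted to $\Pi$ is the positive-definite form $(a,b,c)\mapsto ka^2+lb^2+mc^2$, so the origin is a strict local minimum. The plan is to produce a tangent direction ${\bf u}\in\Pi$ at $(a,b,c)$ along which $\log\Phi$ increases to second order — equivalently, to certify that $\mathrm{Hess}(\log\Phi)|_{(a,b,c)}$ is not negative semidefinite on $\Pi$ — by restricting $\log\Phi$ to well-chosen lines through $(a,b,c)$ and reducing matters to one-variable estimates for $t\mapsto\log(1-at)$ of the kind \eqref{secondderivative}--\eqref{thirdderivative}; a convenient bookkeeping device is a Morse count on $\Pi$ (one minimum, at the origin; maxima only at the coordinate-hyperplane two-simplex configurations of the previous paragraph; hence every fully interior critical point is a saddle). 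Granting such a ${\bf u}$, deforming $\Omega_N$ within the three-simplex family along ${\bf u}$ strictly increases $\Phi$, hence $P$, so $\Omega_N$ is not even a local maximum of $P$ and therefore not log-optimal. (Alternatively, once the maximum of $\Phi|_\Pi$ is known to sit on a coordinate hyperplane, the first paragraph identifies the corresponding configuration as two-simplex with parts of sizes summing to $d$, whence $P\le P(\omega_{\{[d/2],[(d+1)/2]\}})\le\mathcal P(d+2,d)$ by the convexity of $F$ used in Lemma~\ref{masscenter}, with strict inequality for the genuinely three-simplex $\Omega_N$.)

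The main obstacle is exactly this last step: constructing the ascent direction, i.e. showing $\mathrm{Hess}(\log\Phi)|_{(a,b,c)}$ has a positive eigenvalue on $\Pi$. The difficulty is that $1-ab,\ 1-ac,\ 1-bc$ need not all exceed $1$ (the heights carry mixed signs, since $ka+lb+mc=0$ with $(a,b,c)\neq{\bf 0}$), so the naive convexity/Jensen bounds obtained from the three scalar force equations give only $\Phi\ge 1$ at a critical point — the opposite inequality to the one needed — and the candidate ${\bf u}$ must be kept tangent to $\Pi$. I expect the resolution to combine the reduction to $t\mapsto\log(1-at)$ with the monotonicity of $F'$ already established in \eqref{firstderivative}--\eqref{thirdderivative}.
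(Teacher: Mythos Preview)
Your geometric setup matches the paper exactly: the three centroids are collinear along a line $L$ orthogonal to the mutually orthogonal edge-spans, the six distance formulas are as you write, and your observation that the stationarity equations \eqref{VecEqs} force the \emph{unconstrained} gradient $\nabla\log\Phi(a,b,c)=0$ (not merely the tangential gradient) together with $ka+lb+mc=0$ is correct and important. Your handling of the $c=0$ boundary case is also fine.

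The genuine gap is the step you yourself flag: you never establish that the Hessian of $\log\Phi$ on $\Pi$ has a positive direction at the interior critical point, and the Morse-count sketch is not a proof (you have not classified the critical points on the coordinate hyperplanes of $\Pi$, nor argued properness carefully enough to invoke an index relation). So as written the argument does not close.

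The paper avoids the second-order analysis entirely by \emph{solving} the first-order system. Writing $u=a,\ v=b,\ w=c$, the three equations $\nabla\log\Phi=0$ (your equations, the paper's \eqref{Eq_u}--\eqref{Eq_w}) together with $ku+lv+mw=0$ are manipulated to the factored relation
\[
w(u-v)\bigl[(N-1)uv+1\bigr]=0,
\]
and its cyclic partners. If $u,v,w$ are pairwise distinct this forces $uv=uw=vw=-1/(N-1)$, hence $u(v-w)=0$, a contradiction. Therefore two heights coincide, say $u=v$, and then $uw=-1/(N-1)$ determines everything: all distances from $W$ to $U\cup V$ equal $2N/(N-1)$, and $1-u^2=\dfrac{(k+l-1)N}{(k+l)(N-1)}$. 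This yields a closed form for $P(\Omega_{\{k,l,m\}})$, which the paper then compares directly with the \emph{specific} two-simplex competitor $\omega_{\{k,\,l+m\}}$; the inequality $P(\Omega_{\{k,l,m\}})<P(\omega_{\{k,l+m\}})$ reduces to
\[
L(l)+L(k+l+m)<L(l+m)+L(k+l),\qquad L(x)=-\tfrac12F(x-1),
\]
which follows from the strict concavity $L''<0$ already proved in \eqref{secondderivative}--\eqref{thirdderivative}. In short: push the first-order equations one algebraic step further instead of passing to the Hessian; the overdetermined system $\{\nabla\log\Phi=0,\ ka+lb+mc=0\}$ pins the configuration down completely, and the comparison becomes elementary.
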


\begin{proof}
Suppose $\Omega_N$ is log-optimal, and let ${\bf u}$, ${\bf v}$, and ${\bf w}$ be centers of mass of the three simplexes $U$, $V$, and $W$ respectively. By Lemma \ref{masscenter} we have that ${\bf u}$, ${\bf v}$, and ${\bf w}$ are all non-zero vectors. We can utilize the representations
\begin{equation}\label{massEq}{\bf u}=\frac{1}{k}\sum_{i=1}^k {\bf x}_i,\quad {\bf v}=\frac{1}{l}\sum_{i=1}^l {\bf y}_i,\quad {\bf w}=\frac{1}{m}\sum_{i=1}^m {\bf z}_i,
\end{equation}
and the mirror relations to derive that
\[{\bf u}, {\bf v}, {\bf w} \in {\rm span}\left( \{ {\bf x}_1-{\bf x}_j\}_{j=2}^k,\{ {\bf y}_1-{\bf y}_j\}_{j=2}^l,\{ {\bf z}_1-{\bf z}_j\}_{j=2}^m \right)^\perp.\]
Since the vectors in the span are linearly independent, the span will be $k+l+m-3=d-1$ dimensional. Hence, its orthogonal complement is one dimensional and thus  ${\bf u}, {\bf v}, {\bf w}$ must be collinear. Denote by $u,v,w$ their coordinates w.r.t. a unit vector along the common line. The center of mass condition implies
\begin{equation}\label{uvw_masscenter} ku+lv+mw=0.
\end{equation}
There will be six distances, the edges of $U$, $V$, and $W$,
\[
|{\bf x}_i-{\bf x}_j|^2=\frac{2k(1-u^2)}{k-1},\ |{\bf y}_i-{\bf y}_j|^2=\frac{2l(1-v^2)}{l-1}, \ |{\bf z}_i-{\bf z}_j|^2=\frac{2m(1-w^2)}{m-1},
\]
and the distances between vertices in different simplexes
\[
|{\bf x}_i-{\bf y}_j|^2=2(1-uv),\ |{\bf x}_i-{\bf z}_j|^2=2(1-uw), \ |{\bf y}_i-{\bf z}_j|^2=2(1-vw).
\]
Indeed, let us illustrate how to derive one of these distance formulas, say
\[|{\bf x}_i-{\bf y}_j|^2=2(1-{\bf x}_i\cdot{\bf y}_j)=2(1-{\bf u}\cdot{\bf y}_j)=2(1-{\bf u}\cdot {\bf v})=2(1-uv),\]
where we used $({\bf x}_j-{\bf u})\cdot{\bf y}_j=0$ and ${\bf u}\cdot ({\bf y}_j-{\bf v})=0$, which easily follow from ${\bf x}_i\sim {\bf x}_j$, ${\bf y}_i\sim {\bf x}_j$, and \eqref{massEq}.

From ${\rm rank}(M)=2$ and \eqref{CoeffEq} we have (when $i=1$) that
\[ \frac{d_{N,2}}{d_{1,2}}=(N-1)-\sum_{j\not=1,N} \frac{d_{N,j}}{d_{1,j}}
= d_{N, 1}\bigg( \frac{N-1}{2}-\sum_{j\not= 1} \frac{1}{d_{i,j}}\bigg) .\]
Simplification (recall that $d_{N,2}=d_{N,1}$ as ${\bf x}_1\sim {\bf x}_2$), and similar considerations for the remaining indexes $i$ in \eqref{CoeffEq} yield the following equations
\begin{eqnarray*}
\frac{k-1}{2(1-u^2)}+\frac{l}{2(1-uv)}+\frac{m}{2(1-uw)}&=&\frac{N-1}{2},\\
\frac{k}{2(1-uv)}+\frac{l-1}{2(1-v^2)}+\frac{m}{2(1-vw)}&=&\frac{N-1}{2},\\
\frac{k}{2(1-uw)}+\frac{l}{2(1-vw)}+\frac{m-1}{2(1-w^2)}&=&\frac{N-1}{2}.
\end{eqnarray*}
Algebraic manipulations give the system (we use that $u,v,w \not=0$)
\begin{eqnarray}
\frac{(k-1)u}{2(1-u^2)}+\frac{lv}{2(1-uv)}+\frac{mw}{2(1-uw)}&=&0,\label{Eq_u}\\
\frac{ku}{2(1-uv)}+\frac{(l-1)v}{2(1-v^2)}+\frac{mw}{2(1-vw)}&=&0,\label{Eq_v}\\
\frac{ku}{2(1-uw)}+\frac{lv}{2(1-vw)}+\frac{(m-1)w}{2(1-w^2)}&=&0\label{Eq_w}.
\end{eqnarray}
Substituting $mw=-ku-lv$ from \eqref{uvw_masscenter} into \eqref{Eq_u} and \eqref{Eq_v} we obtain (after dividing by $u$ and $v$ respectively)
\begin{eqnarray*}
(ku^2-(k-1)uw-1)(1-uv)+lv(v-w)(1-u^2)&=&0 \\
(lv^2-(l-1)vw-1)(1-uv)+ku(u-w)(1-v^2)&=&0,
\end{eqnarray*}
which after subtraction reduces to
\[w(u-v)[(k+l+m-1)uv+1]=0.\]

Suppose that all $u,v,w$ are distinct. Then $uv=-1/(k+l+m-1)$. By symmetry we obtain $uw=-1/(k+l+m-1)$, thus deriving $u(v-w)=0$, which is absurd as $u,v,w\not=0$ and $u,v,w$ are distinct.

Therefore, without loss of generality we may assume that $u=v$. Since $w=-(k+l)u/m$, we have $u\not= w$, which as in the derivation above implies
\[ uw=-\frac{1}{k+l+m-1}.\]
Together with \eqref{sum_2N} this yields that
\[|{\bf z}_i-{\bf z}_j|^2=|{\bf z}_i-{\bf x}_j|^2=|{\bf z}_i-{\bf y}_j|^2=\frac{2N}{N-1}.\]
From $u=-mw/(k+l)$ we get $u^2=-muw/(k+l)=m/[(k+l)(k+l+m-1)]$, which gives
\[ 1-u^2 = \frac{(k+l-1)(k+l+m)}{(k+l)(k+l+m-1)}. \]

We now are ready to compute
\[
P(\Omega_{\{k,l,m\}})=\frac{2^{\frac{N(N-1)}{2}}\left( \frac{N}{N-1}\right)^{\frac{N(N-1)}{2}}\left( \frac{k}{k-1}\right)^{\frac{k(k-1)}{2}}\left( \frac{l}{l-1}\right)^{\frac{l(l-1)}{2}}}{\left( \frac{k+l}{k+l-1}\right)^{\frac{(k+l)(k+l-1)}{2}}}.\]

We shall compare this product with the product of a two-simplex $\omega_{\{k,l+m\}}$, where the origin is the center of mass for both simplexes. We have that

\[
P(\omega_{\{k,l+m\}})=2^{\frac{N(N-1)}{2}}\left( \frac{k}{k-1}\right)^{\frac{k(k-1)}{2}}\left( \frac{l+m}{l+m-1}\right)^{\frac{(l+m)(l+m-1)}{2}}.\]

The inequality $P(\Omega_{\{k,l,m\}})<P(\omega_{\{k,l+m\}})$ is equivalent to
\begin{equation} \label{prod_ineq}
\left( \frac{l}{l-1}\right)^{\frac{l(l-1)}{2}}\left( \frac{N}{N-1}\right)^{\frac{N(N-1)}{2}}<\left( \frac{k+l}{k+l-1}\right)^{\frac{(k+l)(k+l-1)}{2}}\left( \frac{l+m}{l+m-1}\right)^{\frac{(l+m)(l+m-1)}{2}}.\end{equation}
Let
\[L(x):=\frac{x(x-1)}{2}[\ln x - \ln(x-1)].\]
Observe, that $L(x)=-F(x)/2$, where $F(x)$ is defined by \eqref{F}. From \eqref{secondderivative} and the discussion therein we have that $L''(x)<0$. But the inequality \eqref{prod_ineq} is equivalent to
\[ L(l)+L(k+l+m)<L(l+m)+L(k+l) ,\]
which can be easily seen from the concavity property of $L(x)$. Indeed, the chord connecting $(l,L(l))$ and $(k+l+m,L(k+l+m))$ lies below the graph (a drawing may be beneficial). If we denote the intersections of the chord with the vertical lines $x=l+m$ and $x=k+l$ as $(l+m,M_1)$ and $(k+l,M_2)$, then
\[ L(l)+L(k+l+m)=M_1+M_2<L(l+m)+L(k+l).\]
This completes the proof of Lemma \ref{three-simplex}.
\end{proof}

\section{Two new log-optimal configurations}
\setcounter{equation}{0}

We now proceed with the proofs of the log-optimality of the two configurations in Theorem \ref{log-optimal}.

\begin{proof}[Proof of Theorem \ref{log-optimal}] i) In this case all conditions (a), (b) and (c) of Theorem \ref{characterization} are possible. The degenerate configuration with minimal logarithmic energy of dimension two is the regular hexagon and  in dimension three is the octahedron (see \cite{KY}). Theorem \ref{monotonicity} implies that none of these are log-optimal configurations on $\Sp^3$.

When (b) holds, without loss of generality we may assume that the North Pole $(0,0,0,1)$ is equidistant to the other five points which are contained in the hyperplane $x_4=-1/5$. The configuration of this type that minimizes energy has two diametrically opposite points $(0,0, \pm \sqrt{24}/5,-1/5)$ and an equilateral triangle orthogonal to that diameter $\{ (\sqrt{24}/5\cos\frac{2k\pi}{3},\sqrt{24}/5\sin\frac{2k\pi}{3},0,-1/5)\}$ (see Corollary \ref{bipyramidOpt}), i.e.
\[\omega_6 = \{(0,0,0,1),(0,0,\pm \sqrt{24}/5,-1/5),(\sqrt{24}/5\cos\frac{2k\pi}{3},\sqrt{24}/5\sin\frac{2k\pi}{3},0,-1/5)\}\]
That this is not log-optimal follows from Lemma \ref{square-based}.

The situation when (c) holds is richer. The various equivalence classes under the mirror relation give rise to the following configurations:

\begin{itemize}
\item[A.] Two orthogonal simplexes, a diameter and regular tetrahedron, with $2$ and $4$ points respectively;
\[ \omega_{\{2,4\}}=\left\{(0,0,0,\pm 1)\right\}\cup\left \{(1,0,0,0),(-\frac{1}{3},\frac{2\sqrt{2}}{3}\cos\frac{2k\pi}{3},\frac{2\sqrt{2}}{3}\sin\frac{2k\pi}{3},0)\right\}_{k=0}^2 \]
Lemma \ref{masscenter} implies that this is not log-optimal.
\item[B.] Three orthogonal simplexes
\[\{(u,\pm \sqrt{1-u^2},0,0)\}\cup\{(v,0,\pm \sqrt{1-v^2},0)\}\cup\{(w,0,0,\pm \sqrt{1-w^2})\},\ \ u+v+w=0.\]
This is not log-optimal because of Lemma \ref{three-simplex}.
\item[C.]  Two orthogonal simplexes with $3$ points each (equilateral triangles);
\[\omega_{\{3,3\}}=\left\{(\cos\frac{2k\pi}{3},\sin\frac{2k\pi}{3},0,0)\right\}_{k=0}^2 \cup \left\{(0,0,\cos\frac{2k\pi}{3},\sin\frac{2k\pi}{3})\right\}_{k=0}^2.\]
This is the log-optimal configuration of six points on $\Sp^3$.
\end{itemize}

ii) The degenerate configuration with minimal logarithmic energy of dimension two is the regular heptagon and dimension  three is not known, conjectured by Rakhmanov to be two diametrically opposite points and a regular pentagon on the equatorial circle. Theorem \ref{monotonicity} implies that these are not log-optimal configurations.

When (b) holds, without loss of generality we may assume that the north pole $(0,0,0,0,1)$ is equidistant to the other six points which are contained in the hyperplane $x_5=-1/6$. The configuration of this type that minimizes energy will have to have two orthogonal equilateral triangles \[T_1=\{(\sqrt{35}/6\cos\frac{2k\pi}{3},\sqrt{35}/6\sin\frac{2k\pi}{3},0,0,-1/6)\}_{k=0}^2\] and \[T_2=\{(0,0,\sqrt{35}/6\cos\frac{2k\pi}{3},\sqrt{35}/6\sin\frac{2k\pi}{3},-1/6)\}_{k=0}^2\] on the hyperplane $x_5=-1/6$. That this is not log-optimal follows from Lemma \ref{square-based} and part i).

The situation when (c) holds is similar to the previous case. The various equivalence classes under the mirror relation give rise to the following configurations:

\begin{itemize}
\item[A.] Two orthogonal simplexes, $\omega_{\{2,5\}}$, a diameter and regular $5$-point simplex on $\Sp^3$.
Lemma \ref{masscenter} implies that this is not log-optimal.
\item[B.] Three orthogonal simplexes $\omega_{\{2,2,3\}}$.
This is not log-optimal because of Lemma \ref{three-simplex}.

\item[C.]  Two orthogonal simplexes $\omega_{\{3,4\}}$, with $3$ points (equilateral triangle) and $4$ points (regular tetrahedron), respectively.
This is the log-optimal configurations of seven points on $\Sp^4$.
\end{itemize}
\end{proof}

{\bf Acknowledgement}: I would like to thank the honoree for his great influence and continued support throughout the span of my career.

\bibliographystyle{amsalpha}

\end{document}